\newtheorem{Theorem}{Theorem}
\newtheorem{Corollary}{Corollary}
\newtheorem{Lemma}{Lemma}
\newenvironment{proof}[1]{\medskip\par\noindent
{\bf Proof:\,}\,#1}{{\mbox{\,$\blacksquare$}\par}}
\DeclareMathOperator{\argmin}{arg\,min}
\newcommand\fs@spaceruled{\def\@fs@cfont{\bfseries}\let\@fs@capt\floatc@ruled
  \def\@fs@pre{\vspace{5\baselineskip}\hrule height.8pt depth0pt \kern2pt}%
  \def\@fs@post{\kern2pt\hrule\relax}%
  \def\@fs@mid{\kern2pt\hrule\kern2pt}%
  \let\@fs@iftopcapt\iftrue}
\newcommand{\sv}{\mathbf{s}}
\newcommand{\Sv}{\mathbf{s}}
\newcommand{\Eb}{\mathbb{E}}
\newcommand{\Sc}{\mathcal{S}}
\newcommand{\Sf}{\mathcal{S}}
\newcommand{\tcr}{\textcolor{red}}
\title{Age-Optimal Transmission of Rateless Codes \\in an Erasure Channel}
\author{
	\IEEEauthorblockN{Songtao Feng \qquad Jing Yang}
	\IEEEauthorblockA{School of Electrical Engineering and Computer Science \\
		The Pennsylvania State University\\
		University Park, PA 16802\\
		\emph{\{sxf302,yangjing\}@psu.edu}}

	\thanks{This work was supported in part by the US National Science Foundation (NSF) under Grant ECCS-1650299.}
}
\begin{document}
\IEEEoverridecommandlockouts
\maketitle
\thispagestyle{empty}

\begin{abstract}
In this paper, we examine a status updating system where updates generated by the source are sent to the monitor through an erasure channel. We assume each update consists of $k$ symbols and the symbol erasure in each time slot follows an independent and identically distributed (i.i.d.) Bernoulli process. We assume rateless coding scheme is adopted at the transmitter and an update can be successfully decoded if $k$ coded symbols are received successfully. We assume perfect feedback available at the source, so that it knows whether a transmitted symbol has been erased instantly. Then, at the beginning of each time slot, the source has the choice to start transmitting a new update, or continue with the transmission of the previous update if it is not delivered yet. We adopt the metric ``Age of Information" (AoI) to measure the freshness of information at the destination, where the AoI is defined as the age of the latest decoded update at the destination. Our objective is to design an optimal online transmission scheme to minimize the time-average AoI. The transmission decision is based on the instantaneous AoI, the age of the update being transmitted, as well as the number of successfully delivered symbols of the update. We formulate the problem as a Markov Decision Process (MDP) and identify the monotonic threshold structure of the optimal policy. Numerical results corroborate the structural properties of the optimal solution.
\end{abstract}
\begin{IEEEkeywords}
Age of information, rateless codes, Markov Decision Process (MDP).
\end{IEEEkeywords}

\section{Introduction}
In recent years, the pervasiveness and availability of wireless connectivity and the proliferation of smart mobile devices have ushered in the era of Internet of Everything, where the cyber, physical, and human worlds are intertwined and continuously interacting with each other. 
In such complex systems, continuous data streams are generated by a large variety of sources (sensors, humans, electronic records, etc.), transferred through different communication pipelines, delivered to data warehouses (central controllers, edge/cloud servers, etc.), and eventually processed and analyzed, facilitating real-time information sharing and decision-making. 
The applications range from automatic robotic control, remote healthcare and medical support, to high-frequency trading, social media trending topics tracking, to name just a few. 
In such applications, the timeliness of the data delivered to the data consumer is of paramount importance: While a millisecond delay of the stock price may result in million-dollar loss of revenue in high-frequency trading, a delayed reporting of heart attack symptoms in a few seconds could lead to loss of lives. 
Therefore, there is a compelling need to quantify and study the ``freshness" of information, and to develop information gathering and processing techniques to ensure information freshness in real-time systems. 

Recently, a new metric called {\it Age of Information} (AoI) has been introduced \cite{infocom/KaulYG12} to measure the timeliness of information in status monitoring systems. 
Originally proposed for a simple monitoring system where the destination keeps track of the system status through time-stamped status updates, the metric is proved to be of fundamental importance for quantifying the freshness of information from the destination's perspective. 
Specifically, at time $t$, the AoI at the destination is defined as $t-u(t)$, where $u(t)$ is the time-stamp of the latest received update at the destination, i.e., the time at which it was acquired at the source. Thanks to its simple and elegant form and its effectiveness in connecting data generation with its transmission for timely information delivery, AoI has attracted growing attention from different research communities.

The time average of AoI has been analyzed in different queueing models, such as single-source single-server queues~\cite{infocom/KaulYG12}, the $M/M/1$ Last-Come First-Served (LCFS) queue with preemption in service~\cite{ciss/KaulYG12}, the $M/M/1$ queue with multiple sources~\cite{isit/YatesK12,YatesK16,Pappas:2015:ICC}, the LCFS queue with gamma-distributed service time and Poisson update packet arrivals~\cite{isit/NajmN16}. The AoI in systems with multiple servers has been evaluated in~\cite{isit/KamKE13, isit/KamKE14,tit/KamKNE16}.
For more complicated multi-hop networks, reference~\cite{Yates:2018:AoISHS} introduces a novel stochastic hybrid system (SHS) approach to derive explicit age distributions. The optimality properties of a preemptive Last Generated First Served (LGFS) service discipline in a multi-hop network are established in~\cite{isit/BedewySS16}.  With the knowledge of the state of the server, an AoI optimization problem is studied in~\cite{infocom/SunUYKS16} and it shows that the zero-wait policy does not always minimize the AoI. In communication networks, AoI is evaluated in a CSMA system in \cite{Kaul:2011:Secon} and has also been examined in multiple-access channels~\cite{Kaul:2017:MAC,Kadota:2018:INFOCOM} and broadcast channels~\cite{Modiano:2016:BC,Modiano:2018:BC,Hsu:2017:ISIT,Hsu:2018:ISIT}. 

AoI of coded status updating for erasure channels have been studied recently~\cite{Yates:2017:ISIT,Baknina:2018:CISS}. The long-term average AoI under two different coding strategies, i.e., rateless codes and maximum distance separable (MDS) codes, are characterized for single and multiple monitor systems in \cite{Yates:2017:ISIT}. It is shown that MDS coding can match the AoI performance of rateless coding if the redundancy is carefully optimized in response to the channel erasure rate. In \cite{Baknina:2018:CISS}, it considers an energy harvesting erasure channel and shows that rateless coding with save-and-transmit scheme outperforms MDS based schemes. 

In this paper, we focus on AoI optimization for an erasure channel with rateless codes. Instead of characterizing the long-term average AoI with the given coding scheme, our objective is to design online transmission scheduling policies for the rateless codes so that the long-term average AoI is minimized. Specifically, during the transmission of an update, the source has the choice to preempt the current transmission and switch to a new update, or to continue transmitting the current update until it is successfully decoded, based on the instantaneous feedback from the destination. We focus on stationary Markovian policies, and show that the optimal policy exhibits threshold structures through theoretical analysis.

Age-dependent preemptive policies have been studied in \cite{kavitha:ACM:2018,Boyu:JCR,Boyu:SPAWC} for various system settings. It is shown in \cite{kavitha:ACM:2018} that within a class of stationary Markov policies where the decision only depends on the instantaneous AoI at the destination, persistent policies such as to always drop the new update or the old update is optimal for certain service time distributions. Our setup is different from \cite{kavitha:ACM:2018} in the sense that: 1) the service time (i.e., transmission time of each update) has a specific distribution that is not captured in \cite{kavitha:ACM:2018}. 2) Our policy is not only dependent on the AoI at the destination, but also on the age of the unfinished update and the number of successfully delivered symbols of it. 

\section{System Model and Problem Formulation}\label{sec:model}
We consider a status monitoring system where a source and a monitor is connected by an erasure channel. The source continuously transmits updates to the monitor, where each update consists of $k$ symbols. To combat symbol erasures, we assume each update is encoded by a rateless code such that when $k$ coded symbols are correctly received by the monitor, the update is successfully decoded. We assume the source transmits one coded symbol in each time slot, which can be successfully delivered with probability $p$. With instantaneous transmission feedback at the end of each time slot, the source has to decide whether to start transmitting a new packet, or to continue transmitting the previous one if it has not been successfully decoded yet.

Denote the decision of the source at the beginning of time slot $t$ as $a_t\in\{0,1\}$, where $a_t=1$ represents transmitting a new update and $a_t=0$ represents transmitting the unfinished update (if there is one) or being idle. As illustrated in Fig.\ref{fig:fig}, whenever $a_t=1$, the source starts sending a new update, whose age will continuously grow (as indicated by the red dotted curve), and the AoI at the destination will grow simultaneously as well (as indicated by the blue curve). Once the update is successfully decoded, the AoI at the destination will be reset to the age of the delivered update. Due to the random erasures happening during the transmission, each update may take a different duration to get delivered. 

Define $R(T)$ as the total age of information experienced by the system over $[0,T]$. 
We focus on a set of {\it online} policies $\Pi$, in which the information available for determining $a_t$ includes the decision history $\{a_i\}_{i=1}^{t-1}$, the up-to-date erasure pattern, as well as the system parameters (i.e., $k$, $p$ in this scenario).
The optimization problem can be formulated as
\begin{eqnarray}\label{eqn:opt}
\underset{\theta\in\Pi}{\min} & &\limsup_{T\rightarrow \infty} \Eb\left[\frac{R(T)}{T}\right]\end{eqnarray}
where the expectation in the objective function is taken over all possible erasure patterns.

\begin{figure}
	\centering
	\includegraphics[width=2.4in]{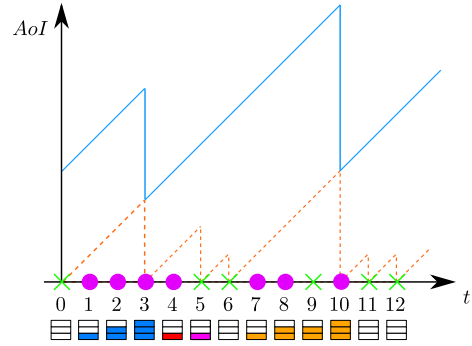}
	\vspace{0.2in}
	\captionof{figure}{AoI evolution when $k=3$. Circles and crosses represent successful transmissions and erasures, respectively. The stacked rectangles represent the number of delivered symbols.}
	\label{fig:fig}
		\vspace{-0.2in}
\end{figure}

\subsection{MDP Formulation}
The optimization problem in (\ref{eqn:opt}) is very challenging to solve in general, due to the random erasures and the temporal dependency in the AoI evolution. In order to make the problem analytically tractable, in the following, we focus on stationary Markovian policies where the decision $a_t$ only depends on the state tuple $\sv_t:=(\delta_t,d_t,l_t)$. Here $\delta_t$ and $d_t$ are the AoI at the destination and the age of unfinished update at the beginning of time slot $t$, respectively. If $d_t=0$, it indicates that an update has just been decoded. $l_t$ is the number of successfully delivered symbols of the unfinished update at the beginning of time slot $t$. When $d_t=0$, we have $l_t=0$ as well. We note that $l_t\in \{0,1,\ldots,k-1\}$.

Assume $\sv_t=(\delta,d,l)$. Then, under decision $a_t=1$, the
system will transit from state $\sv_t$ to $\sv_{t+1}$ according to following equation:
\begin{align}
\Sv_{t+1}=\left\{
\begin{array}{ll}
(\delta+1,1,1), & \mbox{with prob. } p , \\
(\delta+1,1,0) , & \mbox{with prob. } 1-p.
\end{array}\right.
\end{align}
If $a_t=0$ and $l<k-1$,
\begin{align}
\Sv_{t+1}=
\left\{
\begin{array}{ll}
(\delta+1,d+1,l+1), & \mbox{with prob. } p , \\
(\delta+1,d+1,l) , & \mbox{with prob. } 1-p,
\end{array}\right.
\end{align}
and if $a_t=0$ and $l=k-1$,
\begin{align}
\Sv_{t+1}=
\left\{
\begin{array}{ll}
(d+1,0,0), & \mbox{with prob. } p  ,\\
(\delta+1,d+1,k-1) , & \mbox{with prob. } 1-p.
\end{array}\right.
\end{align}

Let $C(\sv_t)$ be the instantaneous AoI at the beginning of time slot $t$ under state $\Sv_t$. Then, $C(\sv_t)=\delta_t$.

Let $\Pi'$ be the set of stationary Markovian policies where $a_t$ only depends on $\sv_t$. Then, the long-term average AoI under scheduling policy $\theta\in \Pi'$ is given by
\begin{align}\label{eqn:avg}
V(\theta)=\limsup_{T\rightarrow\infty}\frac{1}{T+1}\Eb_\theta\left[\sum_{t=0}^{T}C(\Sv_t)|\Sv_0\right],
\end{align}
where $\Eb_\theta$ represents the expectation when scheduling policy $\theta$ is employed.

In order to obtain the optimal solution to (\ref{eqn:avg}), we first
define the corresponding discounted-cost problem with a discount factor $\alpha\in(0,1)$, and obtain the dynamic programming formulation
\begin{align}\label{eqn:mdp-2}
V_{n+1}^\alpha(\sv)=\min_{a\in\{0,1\}} C(\sv)+\alpha \Eb[V_{n}^\alpha(\sv')|\sv,a].
\end{align}
The expectation in (\ref{eqn:mdp-2}) is taken over all possible state $\sv'$ transiting from state $\sv$ with action $a$, and $V_{0}^\alpha(\sv)=0$ for all $\sv$. As $n\rightarrow \infty$,
$V_{n}^\alpha(\sv)\rightarrow V^\alpha(\sv)$, which is the unique solution of the following optimal equation
\begin{align}\label{eqn:mdp-1}
V^\alpha(\sv)=\min_{a\in\{0,1\}}C(\sv)+\alpha \Eb[V^\alpha (\sv')|\sv,a].
\end{align}
\if{0}
Given initial state $\Sv_0$, the expected total $\alpha$-discounted cost under any $\theta\in \Pi'$ is defined as follows:
\begin{align}
V^\alpha(\theta)=\lim_{T\rightarrow\infty}\Eb_\theta \left[\sum_{t=0}^{T}\alpha^t C(\Sv_t)|\Sv_0\right].
\end{align}
The optimal scheduling policy $\theta$ is $\alpha$-optimal if it minimizes the expected total $\alpha$-discounted cost $V^\alpha(\theta)$, namely,
\begin{align}
\theta=\argmin_{\theta\in \Pi} V^\alpha (\theta).
\end{align} 
In general the optimal $\alpha$-discounted cost is dependent on the initial state $\Sv_0$. However, in our setting, the optimal scheduling policy is independent of the initial state $\Sv_0$ and hence we omit it in the expression. For completeness purpose, we provide the following Theorem without proof.

\begin{Theorem}[Proposition~4 in~\cite{Modiano:2018:BC}]\label{thm:mdp}
Under optimal stationary deterministic scheduling policy $\theta$, for state $\sv$, the discounted cost optimality equation hold, which is given by
\begin{align}\label{eqn:mdp-1}
V^\alpha(\sv)=\min_{a\in\{0,1\}}C(\sv)+\alpha \Eb_{\sv'}[V^\alpha (\sv')|\sv,a],  
\end{align}
where the expectation is taken over all possible state $\sv'$ evolved from state $\sv$. Moreover,
we define $V_0^\alpha(\sv)=0$ and for $n\geq 0$,
\begin{align}\label{eqn:mdp-2}
V_{n+1}^\alpha(\sv)=\min_{a\in\{0,1\}} C(\sv)+\alpha \Eb[V_{n}^\alpha(\sv')].
\end{align}
Then, $V_{n}^\alpha(\sv)\rightarrow V^\alpha(\sv)$ as $n\rightarrow \infty$ for every state $\sv$ and $\alpha$.
\end{Theorem}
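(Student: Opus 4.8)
The plan is to recast the value iteration~(\ref{eqn:mdp-2}) as the repeated application of the Bellman operator
\[
(\mathcal{T}V)(\sv):=\min_{a\in\{0,1\}}\ C(\sv)+\alpha\,\Eb[V(\sv')\mid\sv,a],
\]
and to argue that, even though the per-stage cost $C(\sv)=\delta$ is unbounded, $\mathcal{T}$ is a contraction on a suitable weighted-sup-norm Banach space of functions. Banach's fixed-point theorem will then simultaneously give the existence and uniqueness of a solution $V^\alpha$ of~(\ref{eqn:mdp-1}) and the convergence $V_n^\alpha=\mathcal{T}^n 0\to V^\alpha$, and a standard unrolling argument will afterwards identify $V^\alpha$ with the minimum expected discounted cost and exhibit an optimal stationary deterministic policy.

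The main work is establishing the contraction. First I would pick $\beta\in(1,1/\alpha)$, which is possible since $\alpha\in(0,1)$, and introduce the weight $w(\sv)=\beta^{\delta}$ with norm $\|V\|_w=\sup_{\sv}|V(\sv)|/w(\sv)$ on the Banach space $\mathcal{B}_w=\{V:\|V\|_w<\infty\}$. Because $\beta>1$, the ratio $C(\sv)/w(\sv)=\delta/\beta^{\delta}$ is bounded over the (countable) state space, so $C\in\mathcal{B}_w$. The key structural observation I would extract from the three transition rules in Section~\ref{sec:model} is the one-step drift bound $\delta'\le\delta+1$ under either action: the AoI grows by one unless the current update is decoded, in which case it becomes $d+1$, and $d\le\delta$ is an invariant of the dynamics (a freshly started update has $d=1$, thereafter $d$ and $\delta$ both increase by one per slot while $\delta$ can only drop to the current value of $d$), so we may restrict to this invariant state space. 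Hence $w(\sv')\le\beta\,w(\sv)$ deterministically, which makes $\mathcal{T}$ map $\mathcal{B}_w$ into itself and satisfy $\|\mathcal{T}V-\mathcal{T}V'\|_w\le\alpha\beta\,\|V-V'\|_w$ with $\alpha\beta<1$. Banach's theorem then produces the unique fixed point $V^\alpha\in\mathcal{B}_w$ and the bound $\|V_n^\alpha-V^\alpha\|_w\le(\alpha\beta)^n\|V^\alpha\|_w\to0$, so $V_n^\alpha(\sv)\to V^\alpha(\sv)$ for every state $\sv$ and every $\alpha$.

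To finish, I would verify that this $V^\alpha$ is the optimal discounted cost. For any policy $\theta\in\Pi'$ and any initial state $\sv$, iterating the identity $V^\alpha=\mathcal{T}V^\alpha$ along the induced trajectory expresses $V^\alpha(\sv)$ as a lower bound on the $n$-stage expected $\alpha$-discounted cost under $\theta$ plus a tail term $\alpha^n\,\Eb_\theta[V^\alpha(\Sv_n)\mid\Sv_0=\sv]$; since $\delta_t\le\delta_0+t$ deterministically under any policy (again by the invariant $d\le\delta$), this tail is at most $\|V^\alpha\|_w\,\beta^{\delta_0}(\alpha\beta)^n\to0$, so letting $n\to\infty$ shows that $V^\alpha(\sv)$ lower-bounds the expected $\alpha$-discounted cost of $\theta$, while the stationary deterministic policy that chooses a minimizing action in each state turns the inequalities into equalities and attains it. I expect the only nonroutine step to be the handling of the unbounded cost: in the plain sup-norm $\mathcal{T}$ is not a contraction, and the whole argument hinges on choosing the weight $\beta^{\delta}$ with $\beta$ strictly between $1$ and $1/\alpha$, so that the drift factor $\beta$ and the discount $\alpha$ combine to a contraction while $C$ still has finite $w$-norm; the invariant $d\le\delta$ and the deterministic bound $\delta_t\le\delta_0+t$ are the two elementary facts about the dynamics that one needs to check, after which the rest is the textbook contraction-plus-verification machinery for discounted MDPs.
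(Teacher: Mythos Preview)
The paper does not prove this theorem. The statement is attributed to Proposition~4 of~\cite{Modiano:2018:BC} and is introduced with ``For completeness purpose, we provide the following Theorem without proof''; moreover, the entire passage containing it sits inside an \verb|\if{0}...\fi| block and is not even compiled into the final document. So there is no ``paper's own proof'' to compare against.

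On its own merits, your argument is sound and is the standard route for discounted MDPs with unbounded costs: a weighted sup-norm contraction on $\mathcal{B}_w$ with $w(\sv)=\beta^{\delta}$, $\beta\in(1,1/\alpha)$. The two model-specific facts you isolate are exactly the right ones: (i) the one-step drift $\delta'\le\delta+1$ holds deterministically under both actions (when a decode occurs, $\delta'=d+1\le\delta+1$ because the valid state space in Section~\ref{sec:model} enforces $\delta\ge d+k$, hence $d\le\delta$); and (ii) $\delta_t\le\delta_0+t$ follows by iterating (i), which controls the tail $\alpha^n\Eb_\theta[V^\alpha(\Sv_n)]$ in the verification step. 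With those in hand, $\|\mathcal{T}V-\mathcal{T}V'\|_w\le\alpha\beta\|V-V'\|_w$ and $C\in\mathcal{B}_w$ (since $\delta/\beta^\delta$ is bounded) give the fixed point, the convergence $V_n^\alpha\to V^\alpha$, and the existence of an optimal stationary deterministic selector, exactly as you outline. One small sharpening: you can (and should) work on the invariant set $\mathcal{S}=\{\delta\ge d+k,\ d\ge l,\ l<k\}$ identified in the paper rather than merely $\{d\le\delta\}$; the transition rules preserve $\mathcal{S}$, and this is where the paper's subsequent lemmas live.
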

\fi
This is a three-dimensional MDP, which is difficult to solve in general. In the following, we first determine some structural properties of the optimal policy for the $\alpha$-discounted problem. We can then prove that the structure still exists when $\alpha\rightarrow 1$, i.e., for the average cost problem.

\section{Structure of the Optimal Policy}\label{sec:analysis}
Before we proceed, we introduce the following state-action value functions
\begin{align*}
    Q_{n}^\alpha(\sv;a)&:= C(\sv)+\alpha \Eb[V_{n}^\alpha(\sv')|\sv,a],\\
    Q^\alpha(\sv;a)&:= C(\sv)+\alpha \Eb[V^\alpha(\sv')|\sv,a].
\end{align*}
Then the optimality conditions in (\ref{eqn:mdp-2}) and (\ref{eqn:mdp-1}) are equivalent to
\begin{align}
V_{n+1}^\alpha(\sv)&=\min_{a\in\{0,1\}}Q_{n}^\alpha(\sv;a), \label{eqn:mdp-4}\\
V^\alpha(\sv)&=\min_{a\in\{0,1\}}Q^\alpha (\sv;a) \label{eqn:mdp-3}.
\end{align}

We note that for any valid state $\sv:=(\delta,d,l)$, we must have $\delta\geq d+k$, $d\geq l$ and $l< k$. Define the set of valid states as $\Sf$.


\begin{Lemma}\label{lemma:V-d}
For any $\sv\in\Sf$, $V_{n}^\alpha(\delta,d,l)$ is monotonically increasing in $\delta$ at every iteration $n$. Therefore, $V^\alpha(\delta,d,l)$ monotonically increases in $\delta$.
\end{Lemma}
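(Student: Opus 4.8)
The plan is a monotone-preservation induction on the value-iteration index $n$, using the recursion (\ref{eqn:mdp-4}), followed by taking $n\to\infty$. The base case is immediate: $V_0^\alpha\equiv 0$ is (weakly) increasing in $\delta$. For the inductive step, assume $V_n^\alpha(\delta,d,l)$ is nondecreasing in $\delta$, fix $d$ and $l$, and take two valid states $\sv_1=(\delta_1,d,l)$ and $\sv_2=(\delta_2,d,l)$ with $\delta_1\le\delta_2$ (note $\sv_2\in\Sf$ whenever $\sv_1\in\Sf$, since the constraint $\delta\ge d+k$ is preserved when $\delta$ grows). Because $C(\sv)=\delta$ is increasing in $\delta$, it suffices to show $\Eb[V_n^\alpha(\sv')\mid\sv_2,a]\ge\Eb[V_n^\alpha(\sv')\mid\sv_1,a]$ for each action $a\in\{0,1\}$.

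I would read this off the transition rules. For $a=1$, the successors from $\sv_1$ and $\sv_2$ are $(\delta_1+1,1,\cdot)$ and $(\delta_2+1,1,\cdot)$ with the \emph{same} branch probabilities $p,1-p$, so the inequality follows termwise from the induction hypothesis (using $\delta_1+1\le\delta_2+1$); the same holds for $a=0$ with $l<k-1$, where the successors are $(\delta_1+1,d+1,\cdot)$ versus $(\delta_2+1,d+1,\cdot)$. In the remaining case $a=0$, $l=k-1$, the success branch (probability $p$) sends \emph{both} states to $(d+1,0,0)$, whose value does not involve $\delta$, so those two terms are equal, while the erasure branch gives $(\delta_1+1,d+1,k-1)$ versus $(\delta_2+1,d+1,k-1)$, again ordered by the hypothesis. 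Hence $Q_n^\alpha(\sv_2;a)\ge Q_n^\alpha(\sv_1;a)$ for every $a$, and since $f\ge g$ pointwise implies $\min_a f(a)\ge\min_a g(a)$, minimizing in (\ref{eqn:mdp-4}) yields $V_{n+1}^\alpha(\sv_2)\ge V_{n+1}^\alpha(\sv_1)$. Letting $n\to\infty$ and using $V_n^\alpha\to V^\alpha$ pointwise transfers the monotonicity to $V^\alpha$.

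I do not expect a genuine obstacle here; this is the standard ``the DP operator preserves the structure'' argument. The points that need a moment's care are: the $l=k-1$ success branch, where the successor's AoI collapses to $d+1$ and becomes $\delta$-independent, so one must notice this coincidence helps rather than breaks the comparison; checking that every successor state indeed lies in $\Sf$ so that the induction hypothesis applies to it (e.g.\ $(\delta+1,1,1)\in\Sf$ because $\delta\ge d+k\ge k$); and the degenerate states with $d=0$ (hence $l=0$), where one verifies that whichever idle/restart transition governs them still maps $\delta$ to $\delta+1$ with the other coordinates not depending on $\delta$, so the termwise comparison is unaffected.
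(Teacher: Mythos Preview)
Your proposal is correct and follows essentially the same induction-on-$n$ argument as the paper: write out $Q_n^\alpha(\sv;0)$ and $Q_n^\alpha(\sv;1)$, observe each is nondecreasing in $\delta$ by the induction hypothesis, and conclude the minimum is too. If anything, you are more thorough than the paper, which omits the $l=k-1$ case with a ``similarly'' while you correctly note that the success branch there lands at $(d+1,0,0)$ independently of $\delta$ and so contributes equal terms to both sides.
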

\begin{proof}
We prove it by induction. It is obviously true when $n=0$. We assume it is true for an $n\geq 0$. Then, we will prove it holds for the $(n+1)$th iteration as well.	

If $l<k-1$, we have
\begin{align}
Q_{n}^\alpha(\delta,d,l;0)&=\delta+p\alpha V_{n}^\alpha(\delta+1,d+1,l+1) \nonumber \\
&\quad +(1-p)\alpha V_{n}^\alpha(\delta+1,d+1,l) \label{eqn:V-d:1} ,\\
Q_{n}^\alpha(\delta,d,l;1)&=\delta+p\alpha V_{n}^\alpha(\delta+1,1,1) \nonumber \\
&\quad +(1-p)\alpha V_{n}^\alpha(\delta+1,1,0)   \label{eqn:V-d:2} .
\end{align}
Based on the assumption, $Q_{n}^\alpha(\delta,d,l;0)$ and $Q_{n}^\alpha(\delta,d,l;1)$ are both non-decreasing in $\delta$.  Thus, after taking the minimum of them, $V_{n+1}^\alpha(\delta,d,l)$ is increasing in $d$ as well. 

Similarly, we can prove that, when $l=k-1$, $V_{n+1}^\alpha(\delta,d,k-1)$ is increasing in $\delta$. The proof is complete after we combine both cases.
\end{proof}

\begin{Lemma}\label{lemma:V-s}
For any $\sv\in\Sf$, $V_{n}^\alpha(\delta,d,l)$ is monotonically increasing in $d$ at every iteration $n$. Therefore, $V^\alpha(\delta,d,l)$ monotonically increases in $d$.
\end{Lemma}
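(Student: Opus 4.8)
The plan is to mirror the induction behind Lemma~\ref{lemma:V-d}, running an induction on the value-iteration index $n$. The base case is immediate since $V_0^\alpha\equiv 0$. For the inductive step I would assume $V_n^\alpha(\delta,d,l)$ is non-decreasing in $d$ over valid states, and --- this is the point where the two lemmas are chained --- also invoke Lemma~\ref{lemma:V-d} so that $V_n^\alpha$ is simultaneously non-decreasing in $\delta$. Because $V_{n+1}^\alpha(\sv)=\min_{a\in\{0,1\}}Q_n^\alpha(\sv;a)$ and a pointwise minimum of non-decreasing functions is non-decreasing, it suffices to show that for each fixed action $a$ the map $d\mapsto Q_n^\alpha(\delta,d,l;a)$ is non-decreasing.

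For $a=1$ this is trivial: from \eqref{eqn:V-d:2} (and the $l=k-1$ analogue), $Q_n^\alpha(\delta,d,l;1)=\delta+p\alpha V_n^\alpha(\delta+1,1,1)+(1-p)\alpha V_n^\alpha(\delta+1,1,0)$ carries no $d$-dependence at all. For $a=0$ with $l<k-1$, \eqref{eqn:V-d:1} gives $Q_n^\alpha(\delta,d,l;0)=\delta+p\alpha V_n^\alpha(\delta+1,d+1,l+1)+(1-p)\alpha V_n^\alpha(\delta+1,d+1,l)$; incrementing $d$ only increments the second coordinate of both value terms, so the inductive hypothesis applies verbatim.

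The interesting case is $a=0$ with $l=k-1$, where the success branch resets to $(d+1,0,0)$ and thus
\[
Q_n^\alpha(\delta,d,k-1;0)=\delta+p\alpha V_n^\alpha(d+1,0,0)+(1-p)\alpha V_n^\alpha(\delta+1,d+1,k-1).
\]
The failure term $V_n^\alpha(\delta+1,d+1,k-1)$ is again handled by the inductive hypothesis, but for the success term the $d$-dependence has migrated into the \emph{first} (AoI) coordinate; here $V_n^\alpha(d+1,0,0)$ is non-decreasing in $d$ precisely by Lemma~\ref{lemma:V-d}. Hence $Q_n^\alpha(\delta,d,k-1;0)$ is non-decreasing in $d$, and taking the minimum over $a$ closes the induction; letting $n\to\infty$ transfers the property to $V^\alpha$.

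I expect the only real friction to be bookkeeping around the valid-state set $\Sf$: when comparing $(\delta,d,l)$ with $(\delta,d+1,l)$ the constraint $\delta\ge d+k$ tightens, and one must check that every successor state invoked --- in particular $(d+1,0,0)$, which requires $d\ge k-1$, guaranteed by $l=k-1\le d$ --- remains in $\Sf$ so that the inductive hypothesis and Lemma~\ref{lemma:V-d} genuinely apply there. Conceptually, the content of the proof is recognizing that the $l=k-1$, $a=0$ success transition is exactly the step that converts ``age of the unfinished update'' into ``AoI at the destination,'' which is why Lemma~\ref{lemma:V-d} must be established first.
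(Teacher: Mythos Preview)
Your proposal is correct and matches the paper's intended approach: the paper omits the proof as ``similar to the proof of Lemma~\ref{lemma:V-d},'' and your induction on $n$ with the $l<k-1$ versus $l=k-1$ case split is exactly that argument filled in. Your observation that the $l=k-1$, $a=0$ success branch (reset to $(d+1,0,0)$) requires invoking Lemma~\ref{lemma:V-d} because the $d$-dependence migrates into the first coordinate is the one nontrivial ingredient, and the paper relies on this same pairing of Lemma~\ref{lemma:V-d} and Lemma~\ref{lemma:V-s} later in the proof of Theorem~\ref{thm:Q-s}.
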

The proof of Lemma~\ref{lemma:V-s} is similar to the proof of Lemma~\ref{lemma:V-d}, and is omitted for the brevity of this paper.

We have established the monotonicity of $V^\alpha(\delta,d,l)$ in $\delta$ and $d$ in Lemma~\ref{lemma:V-d} and Lemma~\ref{lemma:V-s}, respectively. In the following, we aim to show the monotonicity of $V^\alpha(\delta,d,l)$ in $l$ as well. Before we prove it in Lemma~\ref{lemma:V-l}, we will first introduce Lemma~\ref{V-property1} and Lemma~\ref{V-property3}, respectively.

\begin{Lemma}\label{V-property1}
For any $\sv\in\Sf$, $V_{n}^\alpha(\delta,d,l)\leq V_{n}^\alpha(\delta,0,0)$ at every iteration $n$. Therefore, $V^{\alpha}(\delta,d,l)\leq V^{\alpha}(\delta,0,0)$.
\end{Lemma}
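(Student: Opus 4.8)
The plan is to exploit that action $a=1$ (preempting and starting a fresh update) discards the current unfinished update, so that the continuation cost ceases to depend on $(d,l)$. First I would read off the $a_t=1$ transition from Section~\ref{sec:model} and record that, for every valid state $(\delta,d,l)$,
\begin{align*}
Q_n^\alpha(\delta,d,l;1)=\delta+p\alpha V_n^\alpha(\delta+1,1,1)+(1-p)\alpha V_n^\alpha(\delta+1,1,0),
\end{align*}
whose right-hand side does not involve $d$ or $l$; in particular $Q_n^\alpha(\delta,d,l;1)=Q_n^\alpha(\delta,0,0;1)$ for all admissible $(d,l)$.

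Next I would observe that at the ``just-decoded'' state $(\delta,0,0)$ the two actions induce the \emph{same} transition: since $l=0<k-1$, the $a=0$ transition from $(\delta,0,0)$ sends the state to $(\delta+1,1,1)$ with probability $p$ and to $(\delta+1,1,0)$ with probability $1-p$, exactly as under $a=1$. Hence $Q_n^\alpha(\delta,0,0;0)=Q_n^\alpha(\delta,0,0;1)$, and minimizing over $a$ in (\ref{eqn:mdp-4}) gives $V_{n+1}^\alpha(\delta,0,0)=Q_n^\alpha(\delta,0,0;1)$.

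With these two facts the claim follows by a one-line induction on $n$. The base case $n=0$ is immediate since $V_0^\alpha\equiv 0$. For the inductive step I would use $V_{n+1}^\alpha(\sv)=\min_{a\in\{0,1\}}Q_n^\alpha(\sv;a)\le Q_n^\alpha(\sv;1)$ to write
\begin{align*}
V_{n+1}^\alpha(\delta,d,l)\le Q_n^\alpha(\delta,d,l;1)=Q_n^\alpha(\delta,0,0;1)=V_{n+1}^\alpha(\delta,0,0).
\end{align*}
Letting $n\to\infty$ and using $V_n^\alpha\to V^\alpha$ then transfers the inequality to $V^\alpha$.

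The argument is short, and the one point that requires any care — the ``main obstacle,'' such as it is — is the justification of $V_{n+1}^\alpha(\delta,0,0)=Q_n^\alpha(\delta,0,0;1)$, i.e.\ that restarting is (weakly) optimal whenever there is no unfinished update. For $k\ge 2$ this is the transition coincidence noted above; in the degenerate case $k=1$ one should reinterpret the $a_t=1$ transition of Section~\ref{sec:model} as an immediate decode, but the identity $Q_n^\alpha(\delta,d,l;1)=Q_n^\alpha(\delta,0,0;1)$ still holds and the proof goes through verbatim.
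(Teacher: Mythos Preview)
Your overall structure matches the paper's: induct on $n$, bound $V_{n+1}^\alpha(\delta,d,l)$ above by $Q_n^\alpha(\delta,d,l;1)=Q_n^\alpha(\delta,0,0;1)$, and identify the latter with $V_{n+1}^\alpha(\delta,0,0)$. The problem is precisely the step you flag as the ``main obstacle,'' namely $V_{n+1}^\alpha(\delta,0,0)=Q_n^\alpha(\delta,0,0;1)$. You justify it by reading the $a_t=0$, $l<k-1$ transition rule literally at $d=l=0$ and concluding that the two actions coincide. But the verbal model in Section~\ref{sec:model} says that $a_t=0$ with no unfinished update means \emph{being idle}, and the paper's own proof of this lemma computes accordingly: it takes $Q_n^\alpha(\delta,0,0;0)=\delta+\alpha V_n^\alpha(\delta+1,0,0)$, i.e.\ the next state under $a=0$ is $(\delta+1,0,0)$, not $(\delta+1,1,\cdot)$. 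Under that interpretation your transition-coincidence claim is simply false, and you have only the wrong-direction inequality $V_{n+1}^\alpha(\delta,0,0)\le Q_n^\alpha(\delta,0,0;1)$.

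The paper closes the gap by invoking the induction hypothesis at step $n$: from $V_n^\alpha(\delta+1,1,1)\le V_n^\alpha(\delta+1,0,0)$ and $V_n^\alpha(\delta+1,1,0)\le V_n^\alpha(\delta+1,0,0)$ one gets
\[
Q_n^\alpha(\delta,0,0;1)=\delta+\alpha\bigl[pV_n^\alpha(\delta+1,1,1)+(1-p)V_n^\alpha(\delta+1,1,0)\bigr]\le \delta+\alpha V_n^\alpha(\delta+1,0,0)=Q_n^\alpha(\delta,0,0;0),
\]
so the minimum in (\ref{eqn:mdp-4}) is attained at $a=1$ and $V_{n+1}^\alpha(\delta,0,0)=Q_n^\alpha(\delta,0,0;1)$. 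If you replace your coincidence argument with this use of the hypothesis, the rest of your write-up is correct and coincides with the paper's proof. (Note in particular that the induction is not ``dummy'' here: the hypothesis is genuinely needed to rule out idling being strictly better.)
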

\begin{proof}
It holds when $n=0$. Assume it is true for an $n\geq 0$. We aim to show that it holds for $n+1$ as well. 

According to Eqn.~(\ref{eqn:mdp-4}), we have
\begin{align*}
   & V_{n+1}^\alpha(\delta,0,0)\\
   &=\min\left\{Q_{n}^\alpha(\delta,0,0;0), Q_{n}^\alpha(\delta,0,0;1)\right\}\\
    &=\delta+\alpha\min\{V_{n}^\alpha(\delta+1,0,0),\\ &\qquad pV_{n}^\alpha(\delta+1,1,1)+(1-p)V_{n}^\alpha(\delta+1,1,0)\}\\
    &\geq Q_{n}^\alpha(\delta,0,0;1),
\end{align*}
where the last inequality is based on the assumption that Lemma~\ref{V-property1} holds for an $n\geq 0$.
 
Besides, based on the definition of $Q_{n}^\alpha(\sv;a)$, we have
\begin{align*}
V_{n+1}^\alpha(\delta,d,l)&\leq Q_{n}^\alpha(\delta,d,l;1)=Q_{n}^\alpha(\delta,0,0;1).
\end{align*}
Therefore, $V_{n+1}^\alpha(\delta,d,l)\leq V_{n+1}^\alpha(\delta,0,0)$. 
\end{proof}

Roughly speaking, Lemma~\ref{V-property1} implies that $(\delta,0,0)$ is the ``worst" state among the set of states in the form of $(\delta,d,l)$. Based on Lemma~\ref{V-property1}, we also have the following corollary.

\begin{Corollary}\label{cor:zerowait}
Whenever the source successfully delivers a update, it should start transmitting a new update immediately.
\end{Corollary}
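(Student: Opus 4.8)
The plan is to read the statement off directly from Lemma~\ref{V-property1} by comparing the two state--action value functions at a post-delivery state. After a successful decoding the state has the form $\sv=(\delta,0,0)$ (recall that $d=0$ signals that an update has just been decoded and forces $l=0$), and the two available actions are $a=1$ (begin a fresh update) and $a=0$ (remain idle, since $d=0$ means there is no unfinished update to continue with). So the corollary is equivalent to the claim that $Q^\alpha(\delta,0,0;1)\le Q^\alpha(\delta,0,0;0)$ for every valid $\delta$ and every $\alpha\in(0,1)$, together with the corresponding statement in the limit $\alpha\to 1$.

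First I would expand both quantities using the transition law: $Q^\alpha(\delta,0,0;0)=\delta+\alpha V^\alpha(\delta+1,0,0)$, whereas $Q^\alpha(\delta,0,0;1)=\delta+\alpha\bigl(pV^\alpha(\delta+1,1,1)+(1-p)V^\alpha(\delta+1,1,0)\bigr)$. Then I would invoke Lemma~\ref{V-property1} twice, once with the triple $(\delta+1,1,1)$ and once with $(\delta+1,1,0)$, to obtain $V^\alpha(\delta+1,1,1)\le V^\alpha(\delta+1,0,0)$ and $V^\alpha(\delta+1,1,0)\le V^\alpha(\delta+1,0,0)$. Taking the convex combination with weights $p$ and $1-p$ yields $pV^\alpha(\delta+1,1,1)+(1-p)V^\alpha(\delta+1,1,0)\le V^\alpha(\delta+1,0,0)$, hence $Q^\alpha(\delta,0,0;1)\le Q^\alpha(\delta,0,0;0)$, so $a=1$ attains the minimum in the optimality equation~(\ref{eqn:mdp-3}) at $\sv=(\delta,0,0)$. (In fact the same inequality chain applied to $Q_n^\alpha$ shows $a=1$ is optimal at $(\delta,0,0)$ at every iteration $n$, which is essentially already contained in the proof of Lemma~\ref{V-property1}.)

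Finally I would pass to the average-cost problem: since $Q^\alpha(\delta,0,0;1)\le Q^\alpha(\delta,0,0;0)$ holds uniformly in $\alpha$, and the discounted value functions converge to the relative value function along a sequence $\alpha\to 1$ under the standard conditions already relied upon in the paper, the ordering survives in the limit, so $a=1$ remains optimal for the time-average AoI at every post-delivery state. The main obstacle — really the only point beyond bookkeeping — is making this $\alpha\to 1$ step rigorous; everything preceding it is immediate from Lemma~\ref{V-property1}. It is also worth noting that the argument establishes that $a=1$ is \emph{a} minimizer, not necessarily the unique one, which is exactly what ``should start transmitting a new update immediately'' requires; strict optimality would need a strict version of Lemma~\ref{V-property1}, which is not claimed.
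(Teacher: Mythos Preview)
Your proposal is correct and follows essentially the same route as the paper. The paper does not give a standalone proof of the corollary; it simply notes that it follows from Lemma~\ref{V-property1}, and indeed the inequality $V_{n+1}^\alpha(\delta,0,0)\geq Q_n^\alpha(\delta,0,0;1)$ established inside the proof of Lemma~\ref{V-property1} is exactly the comparison $Q_n^\alpha(\delta,0,0;1)\le Q_n^\alpha(\delta,0,0;0)$ that you spell out, and your passage to the average-cost problem via $\alpha\to 1$ is the same mechanism the paper invokes globally at the end of Section~\ref{sec:analysis}.
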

Corollary~\ref{cor:zerowait} indicates that the zero-wait policy in this scenario is actually optimal, which is in contrast to the result in \cite{infocom/SunUYKS16}. This is because under our setup, the source is allowed to preempt the transmission of an update at any time, while in~\cite{infocom/SunUYKS16}, {the server will finish the transmission once it starts}.


Combining Lemma~\ref{lemma:V-s} and Lemma~\ref{V-property1}, we also have the following corollary.

\begin{Corollary}\label{V-property2}
Whenever the source starts transmitting a new update and the first symbol is erased, it should discard it and start transmitting a new update immediately. Besides, for any $(\delta,d,0)\in\Sf$, $V_{n}^\alpha(\delta,d,0)=V_{n}^\alpha(\delta,0,0)$ at every iteration $n$.
\end{Corollary}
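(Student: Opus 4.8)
The plan is to read the statement off almost directly from Lemma~\ref{lemma:V-s} and Lemma~\ref{V-property1}, with no new induction required, since both of those already hold at every iteration $n$ (and in the limit). Fix a valid state $(\delta,d,0)\in\Sf$; the case $d=0$ is vacuous, so take $d\ge 1$. Since $k\ge 2$ we have $l=0<k-1$, so at $(\delta,d,0)$ the ``continue'' action $a=0$ moves the system to $(\delta+1,d+1,1)$ with probability $p$ and to $(\delta+1,d+1,0)$ with probability $1-p$, whereas the ``new update'' action $a=1$ produces exactly the same successor distribution (and instantaneous cost $\delta$) as at $(\delta,0,0)$. Hence $Q_n^\alpha(\delta,d,0;1)=Q_n^\alpha(\delta,0,0;1)$, and applying Lemma~\ref{lemma:V-s} termwise to the two $a=0$ successors (comparing $(\delta+1,d+1,\cdot)$ with $(\delta+1,1,\cdot)$, which is legitimate because $d+1\ge 1$ and all four states lie in $\Sf$) gives $Q_n^\alpha(\delta,d,0;0)\ge Q_n^\alpha(\delta,0,0;1)$.

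Consequently $a=1$ attains the minimum at $(\delta,d,0)$, and
\begin{align*}
V_{n+1}^\alpha(\delta,d,0)&=Q_n^\alpha(\delta,d,0;1)=Q_n^\alpha(\delta,0,0;1)\\
&\ge\min_{a\in\{0,1\}}Q_n^\alpha(\delta,0,0;a)=V_{n+1}^\alpha(\delta,0,0).
\end{align*}
The reverse inequality $V_{n+1}^\alpha(\delta,d,0)\le V_{n+1}^\alpha(\delta,0,0)$ is exactly Lemma~\ref{V-property1} at iteration $n+1$, so the two coincide; together with the trivial base case ($V_0^\alpha\equiv 0$) this establishes $V_n^\alpha(\delta,d,0)=V_n^\alpha(\delta,0,0)$ for every $n$, and letting $n\to\infty$ (or repeating the argument on the fixed-point equation (\ref{eqn:mdp-3})) gives $V^\alpha(\delta,d,0)=V^\alpha(\delta,0,0)$. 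For the first sentence of the corollary: when the source starts a fresh update and its first transmitted symbol is erased, the system lands in a state of the form $(\delta',1,0)$, i.e.\ with $d=1\ge 1$ and $l=0$; the computation above shows $a=1$ is a minimizer there, so discarding that update and immediately starting a new one is optimal.

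The only thing that needs care is the boundary bookkeeping: at $(\delta,0,0)$ the action $a=0$ is the deterministic ``idle'' self-transition to $(\delta+1,0,0)$ used in the proof of Lemma~\ref{V-property1}, whereas at $(\delta,d,0)$ with $d\ge 1$ the action $a=0$ is the genuine ``keep transmitting'' transition, so one must feed the correct successor states into Lemma~\ref{lemma:V-s}. Beyond that there is no real obstacle: the content is simply that when none of the symbols of the in-progress update have been delivered yet, continuing it only inflates the age $d$ while contributing no decoding progress, so restarting is always weakly better — and strictly no worse in value because the $a=1$ branch is available from both states.
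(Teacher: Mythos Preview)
Your argument is correct and is exactly the fleshing-out of what the paper intends: the paper states only that Corollary~\ref{V-property2} follows by ``combining Lemma~\ref{lemma:V-s} and Lemma~\ref{V-property1}'', and your proof supplies precisely that combination, using monotonicity in $d$ to show $Q_n^\alpha(\delta,d,0;0)\ge Q_n^\alpha(\delta,d,0;1)=Q_n^\alpha(\delta,0,0;1)$ and then squeezing with Lemma~\ref{V-property1}. Your care about the $a=0$ ``idle'' convention at $(\delta,0,0)$ and the implicit assumption $k\ge 2$ is appropriate and does not deviate from the paper.
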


The next lemma characterizes the relationship between the value function at state $(\delta,d,k-1)$ and state $(d,0,0)$.

\begin{Lemma}\label{V-property3}
For any $(\delta,d,k-1)\in\Sf$, $V_{n}^\alpha(\delta,d,k-1)\geq V_{n}^\alpha(d,0,0)$ at every iteration $n$. Therefore, $V^\alpha(\delta,d,k-1)\geq V^\alpha(d,0,0)$.
\end{Lemma}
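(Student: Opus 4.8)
The plan is to prove the finite-iteration statement $V_n^\alpha(\delta,d,k-1)\ge V_n^\alpha(d,0,0)$ for every valid $(\delta,d,k-1)$ by induction on $n$, and then let $n\to\infty$ (using $V_n^\alpha\to V^\alpha$) to obtain the claim for $V^\alpha$. The base case $n=0$ is trivial since $V_0^\alpha\equiv 0$. In the inductive step I would fix a valid state $(\delta,d,k-1)$, note that $\delta\ge d+k\ge d$, write $V_{n+1}^\alpha(\delta,d,k-1)=\min\{Q_n^\alpha(\delta,d,k-1;0),\,Q_n^\alpha(\delta,d,k-1;1)\}$, and show that each of the two state--action values is at least $V_{n+1}^\alpha(d,0,0)$; taking the minimum then closes the induction.

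A preparatory observation, extracted from the argument in the proof of Lemma~\ref{V-property1}, is that action $a=1$ is optimal at every state of the form $(\delta',0,0)$: indeed $V_n^\alpha(\delta'+1,1,1)$ and $V_n^\alpha(\delta'+1,1,0)$ are both $\le V_n^\alpha(\delta'+1,0,0)$, so the $a=1$ branch is no larger than the $a=0$ branch. Hence $V_{n+1}^\alpha(d,0,0)=Q_n^\alpha(d,0,0;1)=d+\alpha[pV_n^\alpha(d+1,1,1)+(1-p)V_n^\alpha(d+1,1,0)]$, and this is the representation against which I would match the two bounds.

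For the $a=1$ branch, transmitting a new symbol from $(\delta,d,k-1)$ gives exactly the successor distribution of $(\delta,0,0)$ under $a=1$, namely $(\delta+1,1,1)$ w.p.\ $p$ and $(\delta+1,1,0)$ w.p.\ $1-p$, so $Q_n^\alpha(\delta,d,k-1;1)=\delta+\alpha[pV_n^\alpha(\delta+1,1,1)+(1-p)V_n^\alpha(\delta+1,1,0)]$; comparing term by term with $Q_n^\alpha(d,0,0;1)$ and using $\delta\ge d$ together with monotonicity of $V_n^\alpha$ in its first argument (Lemma~\ref{lemma:V-d}) yields $Q_n^\alpha(\delta,d,k-1;1)\ge V_{n+1}^\alpha(d,0,0)$. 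For the $a=0$ branch (the $l=k-1$ case), $Q_n^\alpha(\delta,d,k-1;0)=\delta+\alpha[pV_n^\alpha(d+1,0,0)+(1-p)V_n^\alpha(\delta+1,d+1,k-1)]$; the successor $(\delta+1,d+1,k-1)$ is again valid and still has $l=k-1$, so the induction hypothesis replaces $V_n^\alpha(\delta+1,d+1,k-1)$ by $V_n^\alpha(d+1,0,0)$, and with $\delta\ge d$ this gives $Q_n^\alpha(\delta,d,k-1;0)\ge d+\alpha V_n^\alpha(d+1,0,0)$; finally $V_n^\alpha(d+1,1,1),V_n^\alpha(d+1,1,0)\le V_n^\alpha(d+1,0,0)$ by Lemma~\ref{V-property1} (and $V_n^\alpha(d+1,1,0)=V_n^\alpha(d+1,0,0)$ by Corollary~\ref{V-property2}), so $d+\alpha V_n^\alpha(d+1,0,0)\ge d+\alpha[pV_n^\alpha(d+1,1,1)+(1-p)V_n^\alpha(d+1,1,0)]=V_{n+1}^\alpha(d,0,0)$.

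The step I expect to be delicate is the $a=0$ branch: it is there that the induction hypothesis is invoked (on the successor $(\delta+1,d+1,k-1)$, whose third coordinate is still $k-1$), and it must then be chained with Lemma~\ref{V-property1} to fold the resulting $(d+1,0,0)$-term back into the action-$1$ expression for $V_{n+1}^\alpha(d,0,0)$ --- neither ingredient suffices on its own. A minor bookkeeping point is that the comparison state $(d,0,0)$ is valid precisely when $d\ge k$, so for the boundary $d=k-1$ one either restricts attention to $(d,0,0)\in\Sf$ or simply reads $V_n^\alpha(k-1,0,0)$ off the same dynamic-programming recursion; the argument above is unaffected either way. Everything passes to the limit $n\to\infty$, giving $V^\alpha(\delta,d,k-1)\ge V^\alpha(d,0,0)$.
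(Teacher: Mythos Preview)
Your proposal is correct and follows essentially the same route as the paper: induction on $n$, use Corollary~\ref{cor:zerowait} (your ``preparatory observation'') to write $V_{n+1}^\alpha(d,0,0)=Q_n^\alpha(d,0,0;1)$, bound the $a=1$ branch via Lemma~\ref{lemma:V-d}, and bound the $a=0$ branch by the induction hypothesis together with Lemma~\ref{V-property1}. Your identification of the delicate step (chaining the hypothesis with Lemma~\ref{V-property1} in the $a=0$ branch) and the boundary remark on $(k-1,0,0)$ are both apt; the paper simply leaves that bookkeeping implicit.
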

\begin{proof}
The statement is true for $n=0$. Assume for an $n\geq 0$, $V_{n}^\alpha(\delta,d,k-1)\geq V_{n}^\alpha(d,0,0)$ for any state $(\delta,d,k-1)$ in $\Sf$. We aim to show the inequality holds for $n+1$ as well. 

We note that
\begin{align}
&V_{n+1}^\alpha(d,0,0)=Q^\alpha_n(d,0,0;1) \label{eqn:V-property3:+1}\\
&=d\hspace{-0.02in}+\hspace{-0.02in}p\alpha V_{n}^\alpha(d+1,1,1)\hspace{-0.02in}+\hspace{-0.02in}(1-p)\alpha V_{n}^\alpha(d+1,1,0)   \label{eqn:V-property3:5} \\
&=d\hspace{-0.02in}+\hspace{-0.02in}p\alpha V_{n}^\alpha(d+1,1,1)\hspace{-0.02in}+\hspace{-0.02in}(1-p)\alpha V_{n}^\alpha(d+1,0,0)   \label{eqn:V-property3:6},
\end{align}
where (\ref{eqn:V-property3:+1}) is due to Corollary~\ref{cor:zerowait}, and (\ref{eqn:V-property3:6}) is based on Corollary~\ref{V-property2}.

Meanwhile,
\begin{align*}
&Q_{n}^\alpha(\delta,d,k-1;1) \nonumber \\
&=\delta+p\alpha V_{n}^\alpha(\delta+1,1,1)+(1-p)\alpha V_{n}^\alpha(\delta+1,1,0)\\ 
&=\delta+p\alpha V_{n}^\alpha(\delta+1,1,1)+(1-p)\alpha V_{n}^\alpha(\delta+1,0,0). 
\end{align*}

Therefore, based on Lemma~\ref{lemma:V-d} and the fact that $\delta\geq d+k$, we have 
\begin{align}\label{eqn:Q1}
Q_{n}^\alpha(\delta,d,k-1;1)\geq V_{n+1}^\alpha(d,0,0).
\end{align}

Besides,
\begin{align}
&Q_{n}^\alpha(\delta,d,k-1;0) \nonumber \\
&=\delta\hspace{-0.02in}+\hspace{-0.02in}p\alpha V_{n}^\alpha(d+1,0,0)  \hspace{-0.02in}+\hspace{-0.02in}(1\hspace{-0.02in}-\hspace{-0.02in}p)\alpha V_{n}^\alpha(\delta+1,d+1,k-1)  \nonumber \\
&\geq d\hspace{-0.02in}+\hspace{-0.02in}p\alpha V_{n}^\alpha(d+1,1,1)
\hspace{-0.02in}+\hspace{-0.02in}(1\hspace{-0.02in}-\hspace{-0.02in}p)\alpha V_{n}^\alpha(d+1,0,0)  \label{eqn:V-property3:+2}  \\
&=V_{n+1}^\alpha(d,0,0) ,\label{eqn:Q2}
\end{align}
where (\ref{eqn:V-property3:+2}) is based on Lemma~\ref{V-property1} and the assumption that $V_n^\alpha(\delta+1,d+1,k-1)\geq V_n^\alpha(d+1,0,0)$ for an $n\geq 0$.

Combining (\ref{eqn:Q1}) and (\ref{eqn:Q2}), we have $V_{n+1}^\alpha(\delta,d,k-1)\geq V_{n+1}^\alpha(d,0,0)$. The proof is complete.
\end{proof}

Now, it is ready to establish the monotonicity structure of $l$ on state value function $V^\alpha(\delta,d,l)$.

\begin{Lemma}\label{lemma:V-l}
For any $\sv\in\Sf$, $V_{n}^\alpha(\delta,d,l)$ is monotonically decreasing in $l$ at every iteration $n$. Therefore, $V^\alpha(\delta,d,l)$ monotonically decreases in $l$.
\end{Lemma}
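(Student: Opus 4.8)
The plan is to prove the statement by induction on the iteration index $n$, mirroring the structure of Lemmas~\ref{lemma:V-d}--\ref{V-property3}. The base case $n=0$ is immediate since $V_0^\alpha\equiv 0$. For the inductive step, I would assume $V_n^\alpha(\delta,d,l)$ is nonincreasing in $l$ on $\Sf$ and show that $V_{n+1}^\alpha(\delta,d,l)\geq V_{n+1}^\alpha(\delta,d,l+1)$ whenever both $(\delta,d,l)$ and $(\delta,d,l+1)$ lie in $\Sf$; iterating this adjacent-state inequality gives monotonicity in $l$.

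Since $V_{n+1}^\alpha(\sv)=\min\{Q_n^\alpha(\sv;0),Q_n^\alpha(\sv;1)\}$ and the pointwise minimum of two monotone functions is monotone, I would compare the two action-value functions at $(\delta,d,l)$ and $(\delta,d,l+1)$ separately. The first observation is that $Q_n^\alpha(\delta,d,l;1)=\delta+p\alpha V_n^\alpha(\delta+1,1,1)+(1-p)\alpha V_n^\alpha(\delta+1,1,0)$ does not depend on $l$, so $Q_n^\alpha(\delta,d,l;1)=Q_n^\alpha(\delta,d,l+1;1)$. The remaining work is to establish $Q_n^\alpha(\delta,d,l;0)\geq Q_n^\alpha(\delta,d,l+1;0)$, which I would split into two cases. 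If $l+1<k-1$, both $(\delta,d,l;0)$ and $(\delta,d,l+1;0)$ use the same transition kernel, so $Q_n^\alpha(\delta,d,l;0)-Q_n^\alpha(\delta,d,l+1;0)=p\alpha\bigl[V_n^\alpha(\delta+1,d+1,l+1)-V_n^\alpha(\delta+1,d+1,l+2)\bigr]+(1-p)\alpha\bigl[V_n^\alpha(\delta+1,d+1,l)-V_n^\alpha(\delta+1,d+1,l+1)\bigr]$, and both brackets are nonnegative by the induction hypothesis (all intermediate states are easily checked to lie in $\Sf$ using $\delta\geq d+k$ and $d\geq l+1$). If instead $l=k-2$, then $(\delta,d,k-1;0)$ uses the decoding transition, giving $Q_n^\alpha(\delta,d,k-2;0)-Q_n^\alpha(\delta,d,k-1;0)=p\alpha\bigl[V_n^\alpha(\delta+1,d+1,k-1)-V_n^\alpha(d+1,0,0)\bigr]+(1-p)\alpha\bigl[V_n^\alpha(\delta+1,d+1,k-2)-V_n^\alpha(\delta+1,d+1,k-1)\bigr]$; the second bracket is nonnegative by the induction hypothesis, and the first is nonnegative by Lemma~\ref{V-property3} applied to $(\delta+1,d+1,k-1)$ (with Corollary~\ref{V-property2} available if any $(\cdot,\cdot,0)$-type term needs rewriting). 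In either case $Q_n^\alpha(\delta,d,l;0)\geq Q_n^\alpha(\delta,d,l+1;0)$, and taking the minimum with the common term $Q_n^\alpha(\cdot;1)$ yields $V_{n+1}^\alpha(\delta,d,l)\geq V_{n+1}^\alpha(\delta,d,l+1)$, which closes the induction. Letting $n\to\infty$ transfers the monotonicity to $V^\alpha(\delta,d,l)$.

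The main obstacle is the boundary transition $l=k-2\to l=k-1$: once $l$ crosses into $k-1$, the two compared actions no longer have matching transition kernels (one leads to a decoding event at $(d+1,0,0)$), so the clean term-by-term domination used in the interior case breaks down, and one genuinely has to invoke the earlier structural result Lemma~\ref{V-property3} to control the $V_n^\alpha(\delta+1,d+1,k-1)$ versus $V_n^\alpha(d+1,0,0)$ comparison. Everything else is a routine coupling/term-comparison argument analogous to the proof of Lemma~\ref{lemma:V-d}.
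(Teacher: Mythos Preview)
Your proposal is correct and follows essentially the same approach as the paper: induction on $n$, observing that $Q_n^\alpha(\cdot;1)$ is independent of $l$, and then comparing $Q_n^\alpha(\cdot;0)$ via a case split $l<k-2$ versus $l=k-2$, with the boundary case handled by invoking Lemma~\ref{V-property3} exactly as you do. Your identification of the only nontrivial step (the decoding transition at $l=k-1$) matches the paper's treatment.
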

\begin{proof}
We prove it through induction. Similar to previous proofs, we note that it is true when $n=0$ and we assume it holds for an $n\geq 0$. Then, we show that it holds for the $(n+1)$th iteration as well.

We note that $Q_{n}^\alpha(\delta,d,l+1;1)=Q_{n}^\alpha(\delta,d,l;1)$. Besides, for any valid state $(\delta,d,l)\in\Sf$, $l< k-2$, we have
\begin{align}
&Q_{n}^\alpha(\delta,d,l;0) \nonumber \\
&=\delta\hspace{-0.02in}+\hspace{-0.02in}p\alpha V_{n}^\alpha(\delta\hspace{-0.02in}+\hspace{-0.02in}1,d\hspace{-0.02in}+\hspace{-0.02in}1,l\hspace{-0.02in}+\hspace{-0.02in}1)\hspace{-0.02in}+\hspace{-0.02in}(1\hspace{-0.02in}-\hspace{-0.02in}p)\alpha V_{n}^\alpha(\delta\hspace{-0.02in}+\hspace{-0.02in}1,d\hspace{-0.02in}+\hspace{-0.02in}1,l) \nonumber \\
&\geq\hspace{-0.02in} \delta\hspace{-0.02in}+\hspace{-0.02in}p\alpha V_{n}^\alpha(\delta\hspace{-0.02in}+\hspace{-0.02in}1,d\hspace{-0.02in}+\hspace{-0.02in}1,l\hspace{-0.02in}+\hspace{-0.02in}2) \hspace{-0.02in}+\hspace{-0.02in}(1\hspace{-0.02in}-\hspace{-0.02in}p)\alpha V_{n}^\alpha(\delta\hspace{-0.02in}+\hspace{-0.02in}1,d\hspace{-0.02in}+\hspace{-0.02in}1,l\hspace{-0.02in}+\hspace{-0.02in}1) \nonumber \\
&=Q_{n}^\alpha(\delta,d,l+1;0),\nonumber
\end{align}
where the inequality is based on the assumption that $V_{n}^\alpha(\delta,d,l)$ decreases in $l$. 

When $l=k-2$, we have
\begin{align}
&Q_{n}^\alpha(\delta,d,k-2;0) \nonumber\\
&=\delta+\alpha p V_n^\alpha(\delta+1,d+1,k-1)\nonumber\\
&\quad +\alpha (1-p) V_n^\alpha(\delta+1,d+1,k-2) \nonumber\\
&\geq \delta +p\alpha V_n^\alpha(d+1,0,0)+(1-p)\alpha V_n^\alpha(\delta+1,d+1,k-1) \nonumber \\
&=Q_{n}^\alpha(\delta,d,k-1),\nonumber
\end{align}
where the inequality follows from Lemma~\ref{V-property3} and the assumption that $V_n^\alpha(\delta+1,d+1,l)$ decreases in $l$ . 

Since $V^{\alpha}_{n+1}(\sv)$ takes the minimum of $Q_{n}^\alpha(\sv;0)$ and $Q_{n}^\alpha(\sv;1)$, after combining both cases, we have $V^{\alpha}_{n+1}(d,s,l)\leq V^{\alpha}_{n+1}(d,s,l+1)$.
\end{proof}

In the following, we will prove two threshold structures of the optimal policy based on Lemma~\ref{lemma:V-d} through Lemma~\ref{lemma:V-l}. The first one is the threshold structure with respect to the age of unfinished update, as described in Theorem~\ref{thm:Q-s}, while the second one is the threshold structure on the number of successfully delivered symbols of the update being transmitted, as detailed in Theorem~\ref{thm:Q-l}.
 

\begin{Theorem}\label{thm:Q-s}
Under the optimal policy in $\Pi'$, if the source decides to discard the unfinished update and start transmitting a new update at state $(\delta,d,l)$, it should make the same decision at state $(\delta,d+1,l)$.
\end{Theorem}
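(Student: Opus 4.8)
The plan is to establish the threshold in $d$ by showing that the advantage of preemption over continuation is monotone in $d$. Define
\[
\Delta^\alpha(\delta,d,l):=Q^\alpha(\delta,d,l;0)-Q^\alpha(\delta,d,l;1).
\]
Since the $\alpha$-discounted optimal stationary policy selects $a=1$ at a state $\sv$ exactly when $Q^\alpha(\sv;1)\le Q^\alpha(\sv;0)$, i.e.\ when $\Delta^\alpha(\sv)\ge 0$, it suffices to prove that $\Delta^\alpha(\delta,d,l)$ is non-decreasing in $d$ (for fixed $\delta,l$, over the range of $d$ for which both $(\delta,d,l)$ and $(\delta,d+1,l)$ lie in $\Sf$, i.e.\ $\delta\ge d+k+1$; otherwise the comparison in the theorem is vacuous). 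The corresponding statement for the average-cost-optimal policy in $\Pi'$ then follows by the vanishing-discount argument ($\alpha\to 1$) indicated in Section~\ref{sec:model}.

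The key observation is that the transmit-new action resets $(d,l)$ to $(1,\cdot)$ irrespective of the current $(d,l)$, so
\[
Q^\alpha(\delta,d,l;1)=\delta+p\alpha V^\alpha(\delta+1,1,1)+(1-p)\alpha V^\alpha(\delta+1,1,0)
\]
does not depend on $d$; hence proving $\Delta^\alpha(\delta,d,l)$ non-decreasing in $d$ reduces to proving $Q^\alpha(\delta,d,l;0)$ non-decreasing in $d$. I would split on $l$. For $l<k-1$ we have $Q^\alpha(\delta,d,l;0)=\delta+p\alpha V^\alpha(\delta+1,d+1,l+1)+(1-p)\alpha V^\alpha(\delta+1,d+1,l)$, and both value-function terms are non-decreasing in $d$ by Lemma~\ref{lemma:V-s} (monotonicity of $V^\alpha$ in its second argument), hence so is $Q^\alpha(\delta,d,l;0)$. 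For $l=k-1$ we have $Q^\alpha(\delta,d,k-1;0)=\delta+p\alpha V^\alpha(d+1,0,0)+(1-p)\alpha V^\alpha(\delta+1,d+1,k-1)$; here $V^\alpha(d+1,0,0)$ is non-decreasing in $d$ by Lemma~\ref{lemma:V-d} (monotonicity in the first argument), while $V^\alpha(\delta+1,d+1,k-1)$ is non-decreasing in $d$ by Lemma~\ref{lemma:V-s}. In both cases $Q^\alpha(\delta,d,l;0)$ increases in $d$, so $\Delta^\alpha(\delta,d,l)$ increases in $d$, and $\Delta^\alpha(\delta,d,l)\ge 0$ implies $\Delta^\alpha(\delta,d+1,l)\ge 0$, which is exactly the claim.

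There is essentially no hard step: the monotonicity of $V^\alpha$ in both $\delta$ and $d$ has already been secured in Lemmas~\ref{lemma:V-d}--\ref{lemma:V-s}, and the argument collapses to a one-line case split because the $a=1$ branch is $d$-independent. The only point needing mild care is the $l=k-1$ case, where the successful-delivery branch sends the state to $(d+1,0,0)$, so the relevant monotonicity is in the first coordinate and one must invoke Lemma~\ref{lemma:V-d} rather than Lemma~\ref{lemma:V-s}. If a fully self-contained induction is preferred, the identical case split can be carried out at the level of $Q_n^\alpha$ using the iteration-$n$ versions of Lemmas~\ref{lemma:V-d}--\ref{lemma:V-s} and then passing to the limit $n\to\infty$.
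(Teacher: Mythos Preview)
Your argument is correct and essentially identical to the paper's own proof: both observe that $Q^\alpha(\delta,d,l;1)$ is independent of $d$, reduce the claim to monotonicity of $Q^\alpha(\delta,d,l;0)$ in $d$, split on $l<k-1$ versus $l=k-1$, and invoke Lemmas~\ref{lemma:V-d}--\ref{lemma:V-s} exactly as you do. The paper works at the iteration-$n$ level with $Q_n^\alpha$ throughout, which you correctly note is equivalent to your limit argument.
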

\begin{proof}
To show Theorem~\ref{thm:Q-s}, it suffices to show that
\begin{align}
\Delta Q_{n}^\alpha(\delta,d,l):=Q_{n}^\alpha(\delta,d,l;0)-Q_{n}^\alpha(\delta,d,l;1) 
\end{align} is monotonically increasing in $d$ for every $n$.

We note that $Q_{n}^\alpha(\delta,d,l;1)=Q_{n}^\alpha(\delta,d',l';1)  $ for any valid state $(\delta,d',l')\in \Sc$. Therefore, it is equivalent to show that $Q_n^\alpha(\delta,d,l;0)$ is non-decreasing in $d$.

When $l<k-1$, we have
\begin{align}
Q_{n}^\alpha(\delta,d,l;0)&=\delta+p\alpha V_{n}^\alpha(\delta+1,d+1,l+1) \nonumber \\
& \quad+(1-p)\alpha V_{n}^\alpha(\delta+1,d+1,l),
\end{align}
which is non-decreasing in $d$ according to Lemma~\ref{lemma:V-s}.

When $l=k-1$, we have
\begin{align}
Q_{n}^\alpha(\delta,d,k-1;0)&=\delta+p\alpha V_{n}^\alpha(d+1,0,0)\nonumber \\
&\hspace{-0.35in}+(1-p)\alpha V_{n}^\alpha(\delta+1,d+1,k-1),
\end{align}
which is monotonically increasing in $d$ according to Lemma~\ref{lemma:V-d} and Lemma~\ref{lemma:V-s}.

Combining both cases, we have $Q_{n}^\alpha(\delta,d,l;0)$ and $\Delta Q_{n}^\alpha(\delta,d,l)$ monotonically increases in $d$ for all $n$.  
\end{proof}

\begin{Theorem}\label{thm:Q-l}
Under the optimal policy in $\Pi'$, if the source decides to keep sending the unfinished update at state $(\delta,d,l)$, it should make the same decision at state $(\delta,d,l+1)$.
\end{Theorem}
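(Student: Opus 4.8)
The plan is to reuse, almost verbatim, the machinery behind Theorem~\ref{thm:Q-s}, but now monitoring monotonicity in $l$ instead of in $d$. First I would restate the claim in terms of the gap
\[
\Delta Q^\alpha(\delta,d,l):=Q^\alpha(\delta,d,l;0)-Q^\alpha(\delta,d,l;1).
\]
The hypothesis ``keep sending at $(\delta,d,l)$'' is exactly the inequality $\Delta Q^\alpha(\delta,d,l)\le 0$, and the desired conclusion ``keep sending at $(\delta,d,l+1)$'' is $\Delta Q^\alpha(\delta,d,l+1)\le 0$, so it suffices to prove that $\Delta Q^\alpha(\delta,d,l)$ is non-increasing in $l$ over all pairs $(\delta,d,l),(\delta,d,l+1)\in\Sf$. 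As in Theorem~\ref{thm:Q-s}, I would work first with the value-iteration functions $Q_n^\alpha$ and then pass to the limit.

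The key simplification is that $Q_n^\alpha(\delta,d,l;1)$ does not depend on $(d,l)$ at all: under $a=1$ the unfinished-update coordinates are overwritten by a fresh transmission regardless of their present values, so $Q_n^\alpha(\delta,d,l;1)=Q_n^\alpha(\delta,d',l';1)$ for every pair of valid states sharing the first coordinate. Hence $\Delta Q_n^\alpha(\delta,d,l)$ is non-increasing in $l$ if and only if $Q_n^\alpha(\delta,d,l;0)$ is non-increasing in $l$ — and this last fact has already been proved: it is precisely the content of the two displayed inequality chains in the proof of Lemma~\ref{lemma:V-l}, namely $Q_n^\alpha(\delta,d,l;0)\ge Q_n^\alpha(\delta,d,l+1;0)$ for $l<k-2$ (using the inductive monotonicity of $V_n^\alpha$ in its third argument) and $Q_n^\alpha(\delta,d,k-2;0)\ge Q_n^\alpha(\delta,d,k-1;0)$ at the boundary index (using Lemma~\ref{V-property3} together with that same monotonicity). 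Therefore $\Delta Q_n^\alpha(\delta,d,l)$ is non-increasing in $l$ at every iteration $n$; letting $n\to\infty$ so that $V_n^\alpha\to V^\alpha$ (hence $Q_n^\alpha\to Q^\alpha$) preserves the inequality, and, for the average-cost policy in $\Pi'$, the further passage $\alpha\to 1$ preserves it as well, exactly as in the discounted-to-average reduction invoked throughout Section~\ref{sec:analysis}.

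I do not anticipate a substantive obstacle: the genuine work — establishing that $Q_n^\alpha(\,\cdot\,;0)$ decreases in $l$, which itself rests on Lemmas~\ref{V-property1} and~\ref{V-property3} — was already carried out when proving Lemma~\ref{lemma:V-l}, so this theorem is essentially a corollary of that argument. The only points that demand a little care are bookkeeping ones: ensuring the comparison is taken only between states that both lie in $\Sf$ (so that $l+1<k$ and $d\ge l+1$), and handling the index $l=k-1$ correctly, where the transition structure changes because a successful symbol now decodes the update and resets the destination AoI to $d+1$; this causes no difficulty because the $l=k-2$ inequality already terminates the monotone chain precisely at the $l=k-1$ term.
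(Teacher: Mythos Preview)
Your argument is correct, and it mirrors exactly the strategy the paper uses for Theorem~\ref{thm:Q-s}: reduce the monotonicity of $\Delta Q_n^\alpha$ to the monotonicity of $Q_n^\alpha(\,\cdot\,;0)$ (since the $a=1$ term is independent of $(d,l)$), and then read off the latter from the two displayed inequality chains inside the proof of Lemma~\ref{lemma:V-l}. That is a clean, direct route.

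The paper, however, does \emph{not} argue this way for Theorem~\ref{thm:Q-l}. Instead it runs a short contradiction: assume $\Delta Q_n^\alpha(\delta,d,l)<0$ yet $\Delta Q_n^\alpha(\delta,d,l+1)\ge 0$; combine these with $Q_n^\alpha(\delta,d,l;1)=Q_n^\alpha(\delta,d,l+1;1)$ to get $Q_n^\alpha(\delta,d,l+1;0)\ge Q_n^\alpha(\delta,d,l;0)$, hence $\min_a Q_n^\alpha(\delta,d,l+1;a)\ge \min_a Q_n^\alpha(\delta,d,l;a)$, contradicting the \emph{statement} of Lemma~\ref{lemma:V-l}. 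The trade-off is this: your proof is more direct and actually establishes full monotonicity of $\Delta Q_n^\alpha$ in $l$, but it leans on the internal computations in the proof of Lemma~\ref{lemma:V-l} rather than on its statement; the paper's proof is slightly more roundabout and, strictly speaking, only rules out the sign change (which is all the theorem requires), but it invokes Lemma~\ref{lemma:V-l} as a black box.
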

\begin{proof}
To show Theorem~\ref{thm:Q-l}, it suffices to show that $\Delta Q_{n}^\alpha(\delta,d,l)$ is monotonically decreasing in $l$ for every $n$.
We prove this through contradiction. Assume $\Delta Q_{n}^\alpha(\delta,d,l)< 0$, however,  $\Delta Q_{n}^\alpha(\delta,d,l+1)\geq 0$. Then, we must have
\begin{align}
Q_{n}^\alpha(\delta,d,l+1;0)\geq Q_{n}^\alpha(\delta,d,l;0).
\end{align}
Hence,
\begin{align}
&V_{n}^\alpha(\delta,d,l+1)\nonumber \\
&=\min\{Q_{n}^\alpha(\delta,d,l+1;0),Q_{n}^\alpha(\delta,d,l+1;1)\} \\
&\geq\min\{Q_{n}^\alpha(\delta,d,l;0),Q_{n}^\alpha(\delta,d,l;1)\} \\
&=V_{n}^\alpha(\delta,d,l),
\end{align}
which contradicts the fact that $V_{n}^\alpha(\delta,d,l+1)$ is decreasing in $l$, as shown in Lemma~\ref{lemma:V-l}. Therefore $\Delta Q_{n}^\alpha(\delta,d,l)$ is monotonically decreasing in $l$.
\end{proof}



\begin{Theorem}\label{thm:d-s-l}
Under the optimal policy in $\Pi'$, if the source decides to keep sending the unfinished update at state $(\delta,d,l)$, it should make the same decision at state $(\delta+1,d+1,l+1)$.
\end{Theorem}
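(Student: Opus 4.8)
The plan is to follow the template of Theorems~\ref{thm:Q-s} and~\ref{thm:Q-l}: the assertion is equivalent to a monotonicity property of the advantage $\Delta Q_n^\alpha(\delta,d,l):=Q_n^\alpha(\delta,d,l;0)-Q_n^\alpha(\delta,d,l;1)$, and I would establish that property by induction on the value-iteration index $n$ and then let $n\to\infty$. Since the source keeps at a state $\sv$ under the optimal policy precisely when $\Delta Q^\alpha(\sv)\le 0$, and $\Delta Q_n^\alpha\to\Delta Q^\alpha$ pointwise (because $V_n^\alpha\to V^\alpha$), it suffices to prove that at every iteration $n$ and every valid $(\delta,d,l)$ with $l<k-1$,
\begin{align}\label{eqn:diag}
\Delta Q_n^\alpha(\delta+1,d+1,l+1)\le \Delta Q_n^\alpha(\delta,d,l).
\end{align}
Passing to the limit, (\ref{eqn:diag}) yields $\Delta Q^\alpha(\delta+1,d+1,l+1)\le \Delta Q^\alpha(\delta,d,l)$, so keeping at $(\delta,d,l)$ (i.e.\ $\Delta Q^\alpha(\delta,d,l)\le 0$) implies keeping at $(\delta+1,d+1,l+1)$. (Only $l<k-1$ is relevant, since $(\delta+1,d+1,l+1)$ must satisfy $l+1<k$.)

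Because $Q_n^\alpha(\sv;1)$ depends on $\sv$ only through $\delta$, the instantaneous costs cancel inside each advantage and (\ref{eqn:diag}) becomes a comparison between \emph{increments} of $V_n^\alpha$: roughly, the diagonal increment $V_n^\alpha(\delta+1,d+1,l+1)-V_n^\alpha(\delta,d,l)$ must be no larger than the axial increment of $V_n^\alpha$ in the AoI coordinate at the freshest state $(\cdot,0,0)$ of Lemma~\ref{V-property1}, together with an analogous bound in the $(1,1,0)$ direction. I would therefore carry one or two such ``second-order'' estimates as additional invariants in the same induction that yields Lemmas~\ref{lemma:V-d}--\ref{lemma:V-l}: assuming them at iteration $n$, expand $V_{n+1}^\alpha=\min\{Q_n^\alpha(\cdot;0),Q_n^\alpha(\cdot;1)\}$ at each state appearing in (\ref{eqn:diag}), split into cases according to which action attains the minimum, and in every case close the bound using the inductive hypothesis together with the monotonicity of $V_n^\alpha$ in $\delta$, $d$ and $l$ (Lemmas~\ref{lemma:V-d}, \ref{lemma:V-s}, \ref{lemma:V-l}) and the ordering $V_n^\alpha(\delta,d,l)\le V_n^\alpha(\delta,0,0)$ (Lemma~\ref{V-property1}).

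The delivery boundary $l=k-2$ must be treated on its own, because a successful symbol at $(\delta+1,d+1,l+1)=(\delta+1,d+1,k-1)$ now triggers decoding, so the post-decoding value $V_n^\alpha(d+1,0,0)$ enters the picture. There I would substitute the identities of Corollary~\ref{cor:zerowait} and Corollary~\ref{V-property2} (which turn $Q_n^\alpha(\cdot;1)$ and the post-decoding value into clean linear-in-$\delta$ expressions) and invoke Lemma~\ref{V-property3} to relate $V_n^\alpha(\delta+1,d+1,k-1)$ to $V_n^\alpha(d+1,0,0)$, exactly as in the $l=k-2$ step of the proof of Lemma~\ref{lemma:V-l}; a convexity-type bound on $\delta\mapsto V_n^\alpha(\delta,0,0)$ may also be needed at this step, and if so I would fold it into the same induction.

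The main obstacle is that this is \emph{not} a one-dimensional threshold. The diagonal step bundles a ``discard-favoring'' increment of $d$ (cf.\ Theorem~\ref{thm:Q-s}) with ``keep-favoring'' increments of $l$ (Theorem~\ref{thm:Q-l}) and of $\delta$, so the result cannot be chained from the earlier theorems; a genuinely two-dimensional, second-order invariant on $V_n^\alpha$ is unavoidable, and the delicate points are (i) choosing that invariant so it survives the $\min$ in the Bellman update, and (ii) reconciling it with the change of dynamics at $l=k-1$. A shortcut by contradiction in the style of Theorem~\ref{thm:Q-l} does not close: assuming the source keeps at $(\delta,d,l)$ but discards at $(\delta+1,d+1,l+1)$ and using Theorems~\ref{thm:Q-s}--\ref{thm:Q-l} together with Corollary~\ref{V-property2} collapses (\ref{eqn:diag}) to $V^\alpha(\delta+1,1,1)=V^\alpha(\delta+1,0,0)$ via Lemma~\ref{V-property1}, and ruling out this borderline equality seems to require precisely the finer structure above.
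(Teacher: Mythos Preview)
Your proposal is substantially more complicated than needed, and the ``shortcut'' you dismiss in your last paragraph is precisely the paper's argument. The paper does \emph{not} attempt to prove the global diagonal monotonicity $\Delta Q_n^\alpha(\delta+1,d+1,l+1)\le \Delta Q_n^\alpha(\delta,d,l)$ by induction on $n$; it works directly with $V^\alpha$ and proves the \emph{contrapositive}: assuming $\Delta Q^\alpha(\delta+1,d+1,l+1)>0$, it shows $\Delta Q^\alpha(\delta,d,l)\ge 0$. No second-order invariants, no convexity of $\delta\mapsto V^\alpha(\delta,0,0)$, and no special treatment of $l=k-2$ are required.

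The step you are missing is this. Under the hypothesis $\Delta Q^\alpha(\delta+1,d+1,l+1)>0$, Theorem~\ref{thm:Q-l} gives $\Delta Q^\alpha(\delta+1,d+1,l)>0$ as well. Hence at \emph{both} states $(\delta+1,d+1,l+1)$ and $(\delta+1,d+1,l)$ the minimizer in the Bellman equation is $a=1$, so
\[
V^\alpha(\delta+1,d+1,l+1)=V^\alpha(\delta+1,d+1,l)=Q^\alpha(\delta+1,\cdot,\cdot;1)=V^\alpha(\delta+1,0,0),
\]
the last equality by Corollary~\ref{cor:zerowait}. Plugging these into
\[
\Delta Q^\alpha(\delta,d,l)=p\alpha\bigl[V^\alpha(\delta+1,d+1,l+1)-V^\alpha(\delta+1,1,1)\bigr]+(1-p)\alpha\bigl[V^\alpha(\delta+1,d+1,l)-V^\alpha(\delta+1,0,0)\bigr],
\]
the second bracket vanishes identically and the first equals $V^\alpha(\delta+1,0,0)-V^\alpha(\delta+1,1,1)\ge 0$ by Lemma~\ref{V-property1}. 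That is the whole proof.

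Your objection that this ``collapses to $V^\alpha(\delta+1,1,1)=V^\alpha(\delta+1,0,0)$ and ruling out this borderline equality seems to require finer structure'' conflates contradiction with contrapositive. In a contradiction argument you would additionally assume $\Delta Q^\alpha(\delta,d,l)<0$ and would indeed need strictness somewhere. The contrapositive needs only $\Delta Q^\alpha(\delta,d,l)\ge 0$, i.e.\ that discarding is \emph{weakly} optimal at $(\delta,d,l)$; the tie case is not an obstacle, it is the conclusion. So your proposed inductive program with extra second-order invariants is unnecessary (and, since it aims at a strictly stronger inequality than the theorem requires, potentially harder to push through).
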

\begin{proof}
To prove Theorem~\ref{thm:d-s-l}, we will prove its contrapositive instead, i.e., if $\Delta Q^\alpha(\delta+1,d+1,l+1)> 0$, we must have $\Delta Q^\alpha(\delta,d,l)\geq 0$. We focus on the states with $l<k-1$. 

First, we note that
	\begin{align}
	&\Delta Q^\alpha(\delta,d,l) \nonumber \\
	&=p\alpha[V^\alpha(\delta+1,d+1,l+1)-V^\alpha(\delta+1,1,1)] \nonumber \\
	&\quad+(1-p)\alpha[V^\alpha(\delta+1,d+1,l)-V^\alpha(\delta+1,0,0)] . \label{eqn:new-1}
	\end{align}
If $\Delta Q^\alpha(\delta+1,d+1,l+1)> 0$,
	\begin{align}\label{eqn:thm3-0}
	    V^\alpha(\delta+1,d+1,l+1)&=Q^\alpha(\delta+1,d+1,l+1;1).
	    \end{align}
	    Besides, according to Corollary~\ref{cor:zerowait}, we have 
	    \begin{align}
	    V^\alpha(\delta+1,0,0)&= Q^\alpha(\delta+1,0,0;1)\\
	    &= Q^\alpha(\delta+1,d+1,d+1;1)\label{eqn:thm3-1}.
	\end{align}
	Meanwhile, 	according to Lemma~\ref{V-property1}, we have
	\begin{align}
	V^\alpha(\delta+1,0,0) \geq V^\alpha(\delta+1,1,1).\label{eqn:thm3-2}
	\end{align}
	Combining (\ref{eqn:thm3-0})(\ref{eqn:thm3-1}) and (\ref{eqn:thm3-2}), we have
	\begin{align}\label{eqn:thm3-3}
	    V^\alpha(\delta+1,d+1,l+1)-V^\alpha(\delta+1,1,1)\leq 0.
	\end{align}
	
	On the other hand, Theorem~\ref{thm:Q-l} indicates that
	 $\Delta Q^\alpha(\delta+1,d+1,l)\geq \Delta Q^\alpha(\delta+1,d+1,l+1)>0$. Therefore,
	 \begin{align}
	 & V^\alpha(\delta+1,d+1,l)=Q^\alpha(\delta+1,d+1,l;1)\\
	 &=Q^\alpha(\delta+1,0,0;1) =V^\alpha(\delta+1,0,0).   \label{eqn:thm3-4}
	 \end{align}
Combining (\ref{eqn:thm3-3})(\ref{eqn:thm3-4}) with (\ref{eqn:new-1}), we have $\Delta Q^\alpha(\delta,d,l)\geq 0$. 
\end{proof}

Applying Theorem~\ref{thm:d-s-l} recursively and combining with Corollary~\ref{V-property2}, Theorem~\ref{thm:Q-s}, Theorem~\ref{thm:Q-l}, we characterize the structure of the optimal policy in the following theorem.
\begin{Theorem}\label{thm:threshold}
Consider the optimal policy in $\Pi'$. Starting at a state $(\delta_0, 0,0)$, there exists a threshold $\tau_{d,\delta_0}$ for every $d\geq 0$, such that if the source attempts to transmit an update for $d$ time slots, and $l<\tau_{d,\delta_0}$ coded symbols are successfully delivered, the source should quit the unfinished update and start transmitting a new update at state $(\delta_0+d, d, l)$; Otherwise, it will keep transmitting until the update is successfully decoded. Besides, $\tau_{d,\delta_0}$ monotonically increases in $d$.
\end{Theorem}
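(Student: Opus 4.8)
The plan is to read the structure of the optimal stationary policy off the sequence of states visited when the source starts a fresh update at an instant when the AoI equals $\delta_0$ and then never preempts it; along this trajectory the state after $d$ slots is $(\delta_0+d,d,l)$ with $0\leq l\leq\min\{d,k-1\}$, which are precisely the states named in the theorem. Two of the decisions are already pinned down: by Corollary~\ref{cor:zerowait} the optimal action at $(\delta_0,0,0)$ is $a=1$, and by Corollary~\ref{V-property2} the optimal action at every $(\delta_0+d,d,0)$ is $a=1$, so it only remains to analyze $(\delta_0+d,d,l)$ with $1\leq l\leq k-1$ (and these two facts already determine $\tau_{d,\delta_0}$ on the $l=0$ portion).

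First I would obtain the threshold in $l$ for each fixed $d$. Continuing is optimal at $(\delta_0+d,d,l)$ exactly when the advantage $\Delta Q^\alpha(\delta_0+d,d,l):=Q^\alpha(\delta_0+d,d,l;0)-Q^\alpha(\delta_0+d,d,l;1)$ is nonpositive, and by Theorem~\ref{thm:Q-l} this advantage is non-increasing in $l$; hence the set of $l$ at which continuing is optimal is upward closed, and I define $\tau_{d,\delta_0}$ to be its smallest element (with the convention $\tau_{d,\delta_0}=d+1$ when the set is empty, which is consistent with the $l=0$ case). This yields the stated ``quit iff $l<\tau_{d,\delta_0}$, continue otherwise'' dichotomy, and the ``keep transmitting until decoded'' part then follows by iterating Theorem~\ref{thm:d-s-l}: from a continue-state $(\delta_0+d,d,l)$ a delivery moves the state to $(\delta_0+d+1,d+1,l+1)$ (or decodes the update when $l+1=k$), where Theorem~\ref{thm:d-s-l} makes continuing optimal again, while the erasure steps, which keep $l$ fixed and increase $d$, are covered once the monotonicity of $\tau_{d,\delta_0}$ in $d$ is in hand.

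That monotonicity is the main obstacle. Since $\Delta Q^\alpha$ is non-increasing in $l$, it is enough to show $\Delta Q^\alpha(\delta_0+d,d,l)$ is non-decreasing in $d$ for every fixed $l$. Theorem~\ref{thm:Q-s} already gives this when only $d$ increases and $\delta$ is frozen; the missing ingredient is the effect of the simultaneous unit increase of $\delta$ that accompanies each step along the trajectory. I would supply it by an induction on the value iterates $V_n^\alpha$, parallel to Lemma~\ref{lemma:V-d} through Lemma~\ref{lemma:V-l}, showing that each of the differences $V_n^\alpha(\delta+1,d+1,l+1)-V_n^\alpha(\delta+1,1,1)$ and $V_n^\alpha(\delta+1,d+1,l)-V_n^\alpha(\delta+1,0,0)$ appearing in the expansion of $\Delta Q_n^\alpha(\delta,d,l)$ is itself non-decreasing in $\delta$ (a joint supermodularity of $V_n^\alpha$ in the two age coordinates), with the base states $(\delta+1,1,1)$, $(\delta+1,1,0)$ and $(\delta+1,0,0)$ handled through Corollary~\ref{V-property2}, Lemma~\ref{V-property1} and Lemma~\ref{V-property3} exactly as in the proof of Theorem~\ref{thm:d-s-l}. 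Combining this with Theorem~\ref{thm:Q-s} makes $\Delta Q^\alpha$ non-decreasing along the diagonal $\delta=\delta_0+d$, hence $\tau_{d,\delta_0}$ non-decreasing in $d$.

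Finally, all of these inequalities hold uniformly in $\alpha$ and survive the limit $\alpha\to1$, so the threshold structure carries over to the average-cost problem over $\Pi'$. I expect the diagonal monotonicity of $\Delta Q^\alpha$ in the previous paragraph to be the hardest step: the inductive proof of the supermodularity of $V_n^\alpha$ has to track three qualitatively different successor states at once, and the case $l=k-1$ (where a delivery decodes the update and resets the AoI to $d+1$) needs the extra leverage of Lemma~\ref{V-property3}, just as in the $l=k-1$ case of Lemma~\ref{lemma:V-l}.
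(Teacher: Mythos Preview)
Your plan uses exactly the ingredients the paper cites (its entire proof is the one line ``Applying Theorem~\ref{thm:d-s-l} recursively and combining with Corollary~\ref{V-property2}, Theorem~\ref{thm:Q-s}, Theorem~\ref{thm:Q-l}''), and you deploy them in the same roles: Theorem~\ref{thm:Q-l} for the threshold in $l$ at each fixed $d$, Theorem~\ref{thm:d-s-l} iterated along successful deliveries for the persistence of the continue-decision, and Corollary~\ref{V-property2} for the $l=0$ boundary. In that sense your approach matches the paper's.

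Where you go beyond the paper is in the monotonicity of $\tau_{d,\delta_0}$. You correctly observe that Theorem~\ref{thm:Q-s} only shows $\Delta Q^\alpha(\delta,d,l)$ is non-decreasing in $d$ with $\delta$ \emph{frozen}, whereas along the trajectory $\delta=\delta_0+d$ both coordinates move together; the paper simply cites Theorem~\ref{thm:Q-s} and leaves the diagonal step implicit. Your proposed fix---an inductive supermodularity of $V_n^\alpha$ in the two age coordinates so that $\Delta Q^\alpha$ is also non-decreasing in $\delta$---is a reasonable way to close that gap, and you are right to flag it as the hardest part of the argument.

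There is, however, a logical slip in your treatment of the ``keep transmitting until decoded'' clause. You write that the erasure branch (which keeps $l$ fixed and increases $d$ and $\delta$) ``is covered once the monotonicity of $\tau_{d,\delta_0}$ in $d$ is in hand.'' That goes the wrong way: if $\tau_{d,\delta_0}$ is \emph{increasing} in $d$ and an erasure holds $l$ fixed, a state with $l\ge\tau_{d,\delta_0}$ can land at $l<\tau_{d+1,\delta_0}$, i.e., monotone increase of $\tau$ is precisely what allows an erasure to push you below threshold. What the erasure branch would need is either the reverse inequality or a direct statement that continue at $(\delta_0+d,d,l)$ implies continue at $(\delta_0+d+1,d+1,l)$, and neither follows from the lemmas you invoke. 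The paper's one-line proof does not address this point either, so it is a gap common to both rather than an error unique to your proposal; but you should not claim that increasing $\tau$ resolves it.
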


We point out that all of the structural properties of the optimal policy derived for the $\alpha$-discounted problem hold when $\alpha\rightarrow 1$ \cite{Sennott:1989}. Thus, the optimal policy for the time-average problem also exhibits similar threshold structure.



\if{0}
Before presenting the second threshold structure, we gives the following two lemmas.

\begin{Lemma}\label{lemma:new-2}
When $\alpha\rightarrow1$, (i) $V^\alpha(\delta,d,k-1)> V^\alpha(d+1,0,0)$ for any $(\delta,d,k-1)\in\Sf$, and (ii) $V^\alpha(\delta,0,0)>V^\alpha(\delta+1,1,1) $.
\end{Lemma}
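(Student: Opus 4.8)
The plan is to establish each claim through the one-step Bellman expansion at the special states $(\delta,0,0)$ and $(\delta,d,k-1)$ and then pass $\alpha\to1$, invoking (exactly as the paper already does, via \cite{Sennott:1989}) the vanishing-discount relations $(1-\alpha)V^\alpha(\sv)\to g^\star$, the optimal time-average AoI, which is independent of $\sv$, and $V^\alpha(\sv)-V^\alpha(\sv')\to h(\sv)-h(\sv')$, differences of relative value functions. Throughout I would use the orderings already in hand: monotonicity in $\delta$ (Lemma~\ref{lemma:V-d}) and in $d$ (Lemma~\ref{lemma:V-s}), the worst-state bound $V^\alpha(\cdot,d,l)\le V^\alpha(\cdot,0,0)$ (Lemma~\ref{V-property1}), the identity $V^\alpha(\delta,d,0)=V^\alpha(\delta,0,0)$ together with zero-wait optimality (Corollaries~\ref{cor:zerowait} and~\ref{V-property2}), and the bound $V^\alpha(\delta,d,k-1)\ge V^\alpha(d,0,0)$ (Lemma~\ref{V-property3}).

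For part~(i) I would split on the optimal action at $(\delta,d,k-1)$. If $a=1$ is optimal, Corollary~\ref{V-property2} gives $V^\alpha(\delta,d,k-1)=\delta+\alpha p V^\alpha(\delta+1,1,1)+\alpha(1-p)V^\alpha(\delta+1,0,0)$, and, by Corollaries~\ref{cor:zerowait}--\ref{V-property2}, $V^\alpha(d+1,0,0)=(d+1)+\alpha p V^\alpha(d+2,1,1)+\alpha(1-p)V^\alpha(d+2,0,0)$; since a valid state has $\delta\ge d+k$ with $k\ge2$, so $\delta+1\ge d+3\ge d+2$, monotonicity in the first coordinate (Lemma~\ref{lemma:V-d}) matches the two continuations term by term and yields $V^\alpha(\delta,d,k-1)\ge V^\alpha(d+1,0,0)+(\delta-d-1)\ge V^\alpha(d+1,0,0)+1$, which already settles this case. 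If $a=0$ is optimal, expand $V^\alpha(\delta,d,k-1)=\delta+\alpha p V^\alpha(d+1,0,0)+\alpha(1-p)V^\alpha(\delta+1,d+1,k-1)$, bound the last term below by Lemma~\ref{V-property3} applied at index $d+1$, and use $1-\alpha p=(1-\alpha)+\alpha(1-p)$ to reduce the claim to $\delta-(1-\alpha)V^\alpha(d+1,0,0)+\alpha(1-p)\bigl[V^\alpha(\delta+1,d+1,k-1)-V^\alpha(d+1,0,0)\bigr]>0$; letting $\alpha\to1$ this becomes $\delta+(1-p)\bigl[h(\delta+1,d+1,k-1)-h(d+1,0,0)\bigr]>g^\star$, with the bracket $\ge0$ by Lemma~\ref{V-property3}.

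Part~(ii) is the same expansion at $(\delta,0,0)$: by Corollaries~\ref{cor:zerowait} and~\ref{V-property2} and the regrouping $1-\alpha p=(1-\alpha)+\alpha(1-p)$,
\begin{align*}
V^\alpha(\delta,0,0)-V^\alpha(\delta+1,1,1)
&= \delta-(1-\alpha)V^\alpha(\delta+1,1,1)\\
&\quad +\alpha(1-p)\bigl[V^\alpha(\delta+1,0,0)-V^\alpha(\delta+1,1,1)\bigr],
\end{align*}
where the bracket is $\ge0$ by Lemma~\ref{V-property1}. Passing $\alpha\to1$, the claim reduces to $\delta+(1-p)\bigl[h(\delta+1,0,0)-h(\delta+1,1,1)\bigr]>g^\star$, the bracket again being nonnegative.

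The crux of the whole lemma is the quantitative inequality that both parts reduce to: Lemmas~\ref{V-property1} and~\ref{V-property3} only control the \emph{sign} of the relative-value gap, so one must separately produce a lower bound on it large enough to dominate $g^\star-\delta$. The natural device is a sample-path coupling — run two chains started at $(\delta+1,1,1)$ and $(\delta+1,0,0)$ on the same erasure sequence, apply the optimal policy of the latter to both, and exploit the fact that the former is ``one successfully delivered symbol ahead'' so that its update decodes sooner, thereby lower-bounding the accumulated AoI advantage; weighting that advantage by $1-p$ and adding $\delta$ should beat $g^\star$. I expect the only delicate case — and the place the argument genuinely hinges on — to be the boundary state $\delta=k$ (equivalently $\delta=d+k$ in part~(i)), where $g^\star-\delta$ is largest relative to the coupling saving; showing that the saving strictly exceeds the deficit there is the essential difficulty.
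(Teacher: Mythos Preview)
Your approach---one-step Bellman expansion at the target state followed by a split on the optimal action---is exactly the paper's. For the $a{=}1$ branch of (i) your argument is in fact cleaner than the paper's: you pair the two expansions term by term and extract an explicit margin $\delta-d-1\ge k-1$, whereas the paper simply notes $Q^\alpha(\delta,d,k-1;1)=V^\alpha(\delta,0,0)$ and invokes monotonicity in $\delta$.

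Where you flag a gap, the paper glosses over it. For the $a{=}0$ branch of~(i) the paper writes
\[
Q^\alpha(\delta,d,k-1;0)\ \ge\ \delta+V^\alpha(d+1,0,0)\ >\ V^\alpha(d+1,0,0),
\]
citing ``Lemma~\ref{V-property3} and $\alpha\to1$''; for (ii) it writes $V^\alpha(\delta,0,0)\ge\delta+p\alpha V^\alpha(\delta+1,1,1)>V^\alpha(\delta+1,1,1)$. In both places the final strict inequality silently uses $\alpha V^\alpha\approx V^\alpha$ as $\alpha\to1$, which is invalid since $V^\alpha\to\infty$; once one normalizes correctly, both reduce to precisely the condition you isolate, namely $\delta+(\text{nonnegative relative-value gap})>g^\star$, and the paper never establishes this. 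So the gap you identify is real and is shared by the paper's own argument.

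Your proposed sample-path coupling is a reasonable line of attack but remains a sketch; you do not actually produce the needed quantitative bound, and you correctly anticipate that the boundary case $\delta=d+k$ (resp.\ $\delta=k$ in~(ii)) is where it bites. It is worth noting that this entire lemma, its proof, and everything downstream of it sit inside an \verb|\if{0}...\fi| block in the paper's source---the authors excised it from the published version, presumably for exactly the reason you put your finger on. A complete proof would require either the coupling estimate you outline or an independent bound relating $g^\star$ to the smallest admissible $\delta$; neither you nor the paper supplies one.
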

\begin{proof}
(i) Proof is based on equation (\ref{eqn:mdp-1}) in Theorem~\ref{thm:mdp}. For valid $(\delta,d,k-1)$, $\delta\geq d+k$ and we have $Q^\alpha(\delta,d,k-1;0)=V^\alpha(\delta,0,0)>V^\alpha(d+1,0,0)$ when $\alpha\rightarrow 1$ based on Lemma~\ref{lemma:V-d}. We also note that when $\alpha\rightarrow 1$, we have
\begin{align}
&Q^\alpha(\delta,d,k-1;0) \nonumber \\
=&\delta\hspace{-0.02in}+\hspace{-0.02in}p\alpha V^\alpha(d\hspace{-0.02in}+\hspace{-0.02in}1,0,0) \hspace{-0.02in}+\hspace{-0.02in}(1\hspace{-0.02in}-\hspace{-0.02in}p)\alpha V^\alpha(\delta\hspace{-0.02in}+\hspace{-0.02in}1,d\hspace{-0.02in}+\hspace{-0.02in}1,k\hspace{-0.02in}-\hspace{-0.02in}1) \\
\geq& \delta+V^\alpha(d+1,0,0) \label{eqn:new-2:1} \\
>&V^\alpha(d+1,0,0),
\end{align}
where (\ref{eqn:new-2:1}) follows from Lemma~\ref{V-property3} and $\alpha\rightarrow 1$.

(ii) Similar to the proof of (i), when $\alpha\rightarrow 1$, by Corollary~\ref{coro:1} we have
\begin{align}
&V^\alpha(\delta,0,0)  \nonumber \\
=&\delta\hspace{-0.02in}+\hspace{-0.02in}p\alpha V^\alpha(\delta\hspace{-0.02in}+\hspace{-0.02in}1,1,1) \hspace{-0.02in}+\hspace{-0.02in}(1\hspace{-0.02in}-\hspace{-0.02in}p)\alpha V^\alpha(\delta\hspace{-0.02in}+\hspace{-0.02in}1,0,0) \\
\geq&\delta+p\alpha V^\alpha(\delta+1,1,1) \label{eqn:new-2:+1} \\
>&V^\alpha(\delta+1,1,1)
\end{align}
where (\ref{eqn:new-2:1}) follows from Lemma~\ref{V-property1}
\end{proof}

The following lemma indicates the relationship between the value function at states $(\delta,d,l)$ and evolved from state $(\delta,d,l)$, and the proof is based on sample paths argument.
\begin{Lemma}\label{lemma:new-1}
When $\alpha\rightarrow1$, $V^\alpha(\delta,d,l)> V^\alpha(\delta,d+1,l+1)$ for any $(\delta,d+1,l+1)\in\Sf$.
\end{Lemma}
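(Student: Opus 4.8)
The plan is to prove Lemma~\ref{lemma:new-1} by a sample-path coupling of two copies of the chain, using Lemma~\ref{lemma:new-2}(i) to turn the comparison into a strict one. The guiding observation is that $(\delta,d+1,l+1)$ is exactly ``one delivered symbol ahead of, and one slot older than'' $(\delta,d,l)$: the extra symbol makes the current update finish one successful slot sooner, while being one slot older inflates the reset AoI by exactly one, and these two effects combine so that completing the current update from $(\delta,d+1,l+1)$ never costs more and generically costs strictly less. Concretely, fix an optimal stationary deterministic policy $\pi^*$, run a chain $X$ from $(\delta,d,l)$ under $\pi^*$ (so its expected $\alpha$-discounted cost equals $V^\alpha(\delta,d,l)$), and build a policy $\tilde\pi$ and a chain $Y$ from $(\delta,d+1,l+1)$ coupled so that $\Eb[\mathrm{cost}_Y]\le V^\alpha(\delta,d,l)$ with strict inequality on an event of positive probability; since $V^\alpha(\delta,d+1,l+1)\le\Eb[\mathrm{cost}_Y]$, this is exactly the claim.

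For the coupling, feed both chains the same erasure bit in each slot (applied to whichever update is being transmitted), and have $\tilde\pi$ make $Y$ transmit its initial update exactly when $\pi^*$ makes $X$ transmit its initial update; as long as both are still on their initial updates this keeps $Y$ one delivered symbol ahead of $X$, one slot older, and at the same AoI $\delta+t$ in slot $t$. Let $m$ be the slot in which $Y$'s initial update collects its $(k-l-1)$th successful symbol, i.e.\ the slot in which $Y$ decodes it (finite a.s.). \textbf{Regime A:} if $\pi^*$ first discards $X$'s initial update at some slot $j\le m$, let $\tilde\pi$ discard $Y$'s initial update at the same slot $j$; both chains then occupy the same state right after slot $j$, so coupling all subsequent randomness identically and letting $\tilde\pi$ copy $\pi^*$ gives $\mathrm{cost}_Y=\mathrm{cost}_X$ on this event. \textbf{Regime B:} if $\pi^*$ never discards $X$'s initial update through slot $m$, then at the start of slot $m+1$ chain $Y$ sits in $(d+m+2,0,0)$ and $X$ sits in $(\delta+m+1,\,d+m+1,\,k-1)$; letting $\tilde\pi$ act optimally afterwards, $Y$'s remaining cost is $V^\alpha(d+m+2,0,0)$ and $X$'s is $V^\alpha(\delta+m+1,\,d+m+1,\,k-1)$. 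Slots $0,\dots,m$ cost the same in both chains, and since $\delta\ge d+k$ the state $(\delta+m+1,d+m+1,k-1)$ is valid, so Lemma~\ref{lemma:new-2}(i) yields $V^\alpha(\delta+m+1,d+m+1,k-1)>V^\alpha(d+m+2,0,0)$ and hence $\mathrm{cost}_Y<\mathrm{cost}_X$ on Regime B. Taking expectations over the two regimes gives $V^\alpha(\delta,d+1,l+1)\le V^\alpha(\delta,d,l)$, strict as soon as Regime B has positive probability; and when $a=0$ is optimal at $(\delta,d,l)$, applying Theorem~\ref{thm:d-s-l} repeatedly along the all-success path shows $\pi^*$ keeps choosing $a=0$ through slot $m$ on the event that the first $k-l-1$ slots are all successes, an event of probability $p^{\,k-l-1}>0$, so Regime B does occur.

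The remaining case --- and the main obstacle --- is when $a=1$ is optimal at $(\delta,d,l)$: then Regime B is empty, the coupling only delivers $V^\alpha(\delta,d+1,l+1)\le V^\alpha(\delta,d,l)=Q^\alpha(\delta,d,l;1)=Q^\alpha(\delta,d+1,l+1;1)$, and strictness reduces to showing that at $(\delta,d+1,l+1)$ the ``continue'' action is \emph{strictly} preferable to discarding. I would try to extract this from the threshold structure of Theorem~\ref{thm:threshold} together with the monotonicity in $l$ of Lemma~\ref{lemma:V-l}, arguing that since $l+1\le k-1$ the extra delivered symbol strictly lowers the continuation value and thereby forces $(\delta,d+1,l+1)$ strictly above the discard threshold. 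Excluding a tie here is the delicate point, and it is exactly where the hypothesis $\alpha\to1$ is needed (via Lemma~\ref{lemma:new-2}): for fixed $\alpha<1$ a plain value-iteration induction fails, because substituting the Bellman equations makes the target inequality depend on itself one step into the future.
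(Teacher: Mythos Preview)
First, a point of context: the paper itself does not contain a working proof of this lemma. Both Lemma~\ref{lemma:new-1} and its companion Lemma~\ref{lemma:new-2} sit inside a conditionally excluded block that is not compiled into the final paper, and the proof text carries the authors' own annotations ``still not correct'', ``not true'', and ``how to prove it''. Their intended approach was a simultaneous value-iteration induction on $n$ (proving $V_n^\alpha(\delta,d,l)\ge V_n^\alpha(\delta,d+1,l+1)$ together with a quantitative strengthening of Lemma~\ref{V-property3}); it breaks at the boundary $l=k-2$, and the authors evidently could not repair it and dropped both lemmas from the final version.

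Your sample-path coupling is more transparent than the paper's abandoned induction, and your Regime~A / Regime~B split is sound as far as it goes: when $a=0$ is optimal at $(\delta,d,l)$ you do obtain the strict inequality, provided Lemma~\ref{lemma:new-2}(i) holds (note that this lemma is itself in the excised block, and its proof has its own difficulty when $\alpha$ is strictly less than~$1$, since one needs $\delta>(1-\alpha)V^\alpha(d+1,0,0)$ rather than merely $\delta>0$).

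However, the gap you flag in the $a=1$ case is not a loose end to be tightened; it is the essential obstruction, and the patch you sketch does not close it. If $a=1$ is optimal at \emph{both} $(\delta,d,l)$ and $(\delta,d+1,l+1)$, then
\[
V^\alpha(\delta,d,l)=Q^\alpha(\delta,d,l;1)=Q^\alpha(\delta,d+1,l+1;1)=V^\alpha(\delta,d+1,l+1),
\]
and the strict inequality fails outright. Nothing in the paper rules this configuration out: passing from $(\delta,d,l)$ to $(\delta,d+1,l+1)$ raises $d$ (which increases $\Delta Q^\alpha$, Theorem~\ref{thm:Q-s}) and raises $l$ (which decreases $\Delta Q^\alpha$, Theorem~\ref{thm:Q-l}), and the two effects can net out either way. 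Concretely, take $l=0$, so Corollary~\ref{V-property2} forces $a=1$ at $(\delta,d,0)$; whenever the threshold of Theorem~\ref{thm:threshold} satisfies $\tau_{d+1,\delta-d-1}\ge 2$ --- which the paper's own numerical figures indicate does occur for moderate $p$ and large $k$ --- action $a=1$ is also optimal at $(\delta,d+1,1)$, and equality holds. Your appeal to Lemma~\ref{lemma:V-l} cannot help here, because that lemma compares states at the \emph{same} $d$, not $d$ versus $d+1$; and the hypothesis $\alpha\to 1$ does not rescue strictness either, since the chain of equalities above is exact for every $\alpha$.

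In summary, your coupling is a genuine improvement over the paper's abandoned induction where it applies, but the strict inequality asserted by Lemma~\ref{lemma:new-1} appears to be false as stated --- which is consistent with the authors having removed it from the paper.
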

\begin{proof}
We will prove (i) $V_n^\alpha(\delta,d+1,l+1)\geq V_n^\alpha(\delta,d+1,l+1)$ and (ii) $V_n^\alpha(\delta,d,k-1)-V_n^\alpha(d+1,0,0)\geq nk$ simultaneously. \tcr{previous lemma is not needed if we can prove this one}

\tcr{still not correct}We prove it through induction.

We note that it is true when $n=0$, and we assume it holds for $n$th iteration. Then we show it is true in the $(n+1)$th iteration.

Note that $Q_{n+1}^\alpha(\delta,d,l;1)=Q_{n+1}^\alpha(\delta,d+1,l+1;1)$, it suffice to show $Q_{n+1}^\alpha(\delta,d,l;0)\geq Q_{n+1}^\alpha(\delta,d+1,l+1;0)$.

It is easy to verify that when $l<k-2$, it is true based on induction assumption (i). When $l=k-2$, we have
\begin{align}
Q_{n+1}^\alpha(\delta,d,k-2;0)&=\delta+p\alpha V_n^\alpha(\delta+1,d+1,k-1) \nonumber \\
&\hspace{-0.55in}\quad+(1-p)\alpha V_n^\alpha(\delta+1,d+1,k-2) ,  \\
Q_{n+1}^\alpha(\delta,d+1,k-1;0)&=\delta+p\alpha V_n^\alpha(d+2,0,0) \nonumber \\
&\hspace{-0.35in}\hspace{-0.1in}+(1-p)\alpha V_n^\alpha(\delta+1,d+2,k-1) .
\end{align}
\tcr{??}Based on induction assumption (i) and (ii) we have $V_{n+1}^\alpha(\delta,d+1,k-2)\geq V_{n+1}^\alpha(\delta,d+1,k-1)$ \tcr{not true}.

\begin{align}
Q_{n+1}^\alpha(\delta,d,k-1;0)&=\delta+p\alpha V_n^\alpha(d+1,0,0) \nonumber \\
&\hspace{-0.2in}+(1-p)\alpha V_n^\alpha(\delta+1,d+1,k-1) ,\\
Q_{n+1}^\alpha(\delta,d,k-1;1)&=\delta+p\alpha V_n^\alpha(\delta+1,1,1) \nonumber \\
&\hspace{-0.2in}+(1-p)\alpha V_n^\alpha(\delta+1,0,0) ,\\
V_{n+1}^\alpha(d+1,0,0)&=d+1+p\alpha V_n^\alpha(d+2,1,1) \nonumber \\
&+(1-p)\alpha V_n^\alpha(d+2,0,0) .
\end{align}
Since $V_n^\alpha(\delta,d,k-1)-V_n^\alpha(d+1,0,0)\geq nk$ and $Q_{n+1}^\alpha(\delta,d,k-1;1)>V_{n+1}^\alpha(d+1,0,0)$, it suffice to show $V_n^\alpha(d+1,0,0)-V_n^\alpha(d+2,1,1)\geq nk$. \tcr{how to prove it}
\end{proof}

The next Theorem is the threshold structure with respect to $s$ (AoI of unfinished update) and $l$ (number of successfully delivered symbols).
\begin{Theorem}\label{thm:s-l}
When $\alpha\rightarrow 1$, if the source choose to keep sending previous update at state $(\delta,d,l)$, then the source should make the same choice at state $(\delta,s+1,l+1)$.
\end{Theorem}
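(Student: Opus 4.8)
The plan is to establish, for the $\alpha$-discounted problem, that the advantage $\Delta Q^\alpha(\delta,d,l):=Q^\alpha(\delta,d,l;0)-Q^\alpha(\delta,d,l;1)$ is non-increasing along the ``diagonal'' $(d,l)\mapsto(d+1,l+1)$ at a fixed $\delta$, and then transfer the resulting threshold to the average-cost problem as $\alpha\to1$ via \cite{Sennott:1989}, in the spirit of the preceding threshold theorems. (Note that this does not follow from Theorem~\ref{thm:d-s-l} and the earlier results alone: increasing $d$ and $l$ together at fixed $\delta$ pushes the optimal action in opposite directions, by Theorems~\ref{thm:Q-s} and~\ref{thm:Q-l}, so one genuinely needs value-function inequalities.) Since the dynamics under $a=1$ ignore $(d,l)$, we have $Q^\alpha(\delta,d,l;1)=Q^\alpha(\delta,d+1,l+1;1)$, so the claim reduces to
\[
 Q^\alpha(\delta,d,l;0)\;\geq\;Q^\alpha(\delta,d+1,l+1;0).
\]
Expanding both sides through the transition law, for $l<k-2$ each is $\delta$ plus $\alpha$ times a convex combination of two value functions, and the inequality follows term by term from statement \textbf{(D)}: $V^\alpha(\delta,d,l)\geq V^\alpha(\delta,d+1,l+1)$ for every valid $(\delta,d+1,l+1)$, i.e.\ $V^\alpha$ is non-increasing along the same diagonal (the content of the auxiliary Lemma~\ref{lemma:new-1}). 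For $l=k-2$ the decoding branch enters: $Q^\alpha(\delta,d,k-2;0)$ carries a term $V^\alpha(\delta+1,d+1,k-1)$ whereas $Q^\alpha(\delta,d+1,k-1;0)$ carries $V^\alpha(d+2,0,0)$, so in addition to (D) one needs statement \textbf{(B)}: $V^\alpha(\delta,d,k-1)\geq V^\alpha(d+1,0,0)$, a one-age-unit sharpening of Lemma~\ref{V-property3} (which only gives $\geq V^\alpha(d,0,0)$); this is part (i) of the auxiliary Lemma~\ref{lemma:new-2}. The case $l=k-1$ does not arise, since $(\delta,d+1,k)$ is not a valid state. Granting (D) and (B), the two cases together give $Q^\alpha(\delta,d,l;0)\geq Q^\alpha(\delta,d+1,l+1;0)$, hence the monotonicity of $\Delta Q^\alpha$ along the diagonal, which is the theorem.

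For (D) I would argue by a sample-path coupling. Run one system from $(\delta,d,l)$ under an optimal policy and a second from $(\delta,d+1,l+1)$ under the same erasure realizations, the second copying the first's actions until either the first discards or the second decodes. Up to that instant the two states remain in lockstep --- the second is always exactly one age unit and one delivered symbol ahead --- and their costs coincide slot by slot, since $C(\sv)=\delta$ and the $\delta$-components agree. If the triggering event is a discard by the first, the second discards too, a coupled coin flip on the first new symbol puts both into a common state, and from then on the two systems are identical. The delicate case is the second decoding: by the lockstep invariant this happens exactly one successful slot \emph{before} the first would, leaving the second in the fresh state $(d+\sigma+1,0,0)$ and the first in $(\delta+\sigma,d+\sigma,k-1)$ with $\delta+\sigma\geq d+\sigma+k$ (with $\sigma$ the number of elapsed slots). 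From that point the first incurs exactly $V^\alpha(\delta+\sigma,d+\sigma,k-1)$ and the second, now playing optimally, incurs $V^\alpha(d+\sigma+1,0,0)$, so the coupling closes precisely when (B) holds; summing the slotwise costs gives $V^\alpha(\delta,d+1,l+1)\leq V^\alpha(\delta,d,l)$. As with the earlier results one carries this out for $\alpha<1$, passing in the limit to the relative value function $h$ solving $h(\sv)+g=\min_a C(\sv)+\Eb[h(\sv')\mid\sv,a]$, and then invokes \cite{Sennott:1989}.

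The main obstacle is (B), and it is exactly here that the passage $\alpha\to1$ is essential. Lemma~\ref{V-property3} only gives $V^\alpha(\delta,d,k-1)\geq V^\alpha(d,0,0)$, which falls short of (B) by the gap $V^\alpha(d+1,0,0)-V^\alpha(d,0,0)$; this gap equals $\Eb\big[(1-\alpha^{T})/(1-\alpha)\big]$, with $T$ the time to the first decoding under the optimal policy, and converges to the finite constant $\Eb[T]$ as $\alpha\to1$. Recovering (B) amounts to playing this gap off against the slack $\delta-(d+k)\geq0$ together with the extra per-slot AoI cost (at least $d+\sigma+k$) that the lagging system pays in the window between the two decodings. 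The difficulty is to make this bookkeeping uniform over \emph{all} valid states: when $d$ is small and the erasure rate high --- so $\Eb[T]$ is large --- the available slack can fail to cover the gap, which suggests that (B), and with it Theorem~\ref{thm:s-l} in the stated generality, may have to be restricted to states reachable under the optimal policy (where $\delta$ is forced large) or else established by a finer argument. This is precisely the gap the crossed-out attempts at Lemma~\ref{lemma:new-1} and Lemma~\ref{lemma:new-2} in the manuscript are circling, and closing it is where the real work lies.
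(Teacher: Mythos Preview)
Your approach is the same as the paper's. Both observe that $Q^\alpha(\cdot\,;1)$ is independent of $(d,l)$ and reduce the claim to $Q^\alpha(\delta,d,l;0)\geq Q^\alpha(\delta,d+1,l+1;0)$; both split into the cases $l<k-2$ and $l=k-2$ and appeal to exactly the two auxiliary inequalities you isolate --- your (D) is the paper's Lemma~\ref{lemma:new-1}, and your (B) is the paper's Lemma~\ref{lemma:new-2}(i).

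You have also correctly located the gap, and it is the same one the authors ran into. The paper's inductive attempt at Lemma~\ref{lemma:new-1} is explicitly annotated ``still not correct'' and terminates with ``how to prove it''; and the paper's argument for Lemma~\ref{lemma:new-2}(i) bottoms out at $\delta+\alpha V^\alpha(d{+}1,0,0)\geq V^\alpha(d{+}1,0,0)$, i.e.\ $\delta\geq(1-\alpha)V^\alpha(d{+}1,0,0)$, which as $\alpha\to1$ becomes $\delta\geq g$ --- precisely the failure mode you flag for small $d$ (hence small $\delta\geq d+k$) and small $p$ (hence large $g$). Your sample-path coupling for (D) is cleaner than the paper's aborted induction, but as you note it closes only modulo (B). In short, neither your proposal nor the paper's proof is complete, both are stuck at the same place, and this is presumably why Theorem~\ref{thm:s-l} and its supporting lemmas sit inside an \texttt{\textbackslash if\{0\}\ldots\textbackslash fi} block and were dropped from the published version.
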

\begin{proof}
We will prove that when $\alpha\rightarrow 1$, if $Q(\delta,d,l;0)< Q(\delta,d,l;1)$, then $Q(\delta,d+1,l+1;0)< Q(\delta,d+1,l+1;1)$.

Assume $Q^\alpha(d,s,l;0)< Q^\alpha(d,s,l;1)$.

(i) $l<k-2$.
According to Lemma~\ref{lemma:new-1}, we have
\begin{align}
&Q^\alpha(\delta,d+1,l+1;0)< Q^\alpha(\delta,d,l;0) \nonumber\\
<& Q^\alpha(\delta,d,l;1) =Q^\alpha(\delta,d+1,l+1;1),
\end{align}
where the first inequality follows from Lemma~\ref{lemma:new-1}.

(ii) $l=k-2$.
\begin{align}
Q^\alpha(\delta,d,k-2;0)&=\delta+p\alpha V^\alpha(\delta+1,d+1,k-1)\nonumber \\
&\hspace{-0.2in}+(1-p)\alpha V^\alpha(\delta+1,d+1,k-2) ,\\
Q^\alpha(\delta,d,k-2;1)&=\delta+p\alpha V^\alpha(\delta+1,1,1)\nonumber \\
&\quad +(1-p)\alpha V^\alpha(\delta+1,0,0) ,\\
Q^\alpha(\delta,d+1,k-1;0)&=\delta+p\alpha V^\alpha(d+2,0,0)\nonumber \\
&\hspace{-0.2in}+(1-p)\alpha V^\alpha(\delta+1,d+2,k-1) ,\\
Q^\alpha(\delta,d+1,k-1;1)&=\delta+p\alpha V^\alpha(\delta+1,1,1)\nonumber \\
&\quad +(1-p)\alpha V^\alpha(\delta+1,0,0).
\end{align}
Thus, when $\alpha\rightarrow 1$, we have
\begin{align}
&Q^\alpha(\delta,d+1,k-1;0)< Q^\alpha(\delta,d,k-2;0)\nonumber\\
<& Q^\alpha(\delta,d,k-2;1) =Q^\alpha(\delta,d+1,k-2+1;1),   \label{eqn:s-1-+1}
\end{align}
where the first inequality in (\ref{eqn:s-1-+1}) follows from Lemma~\ref{lemma:new-2} and Lemma~\ref{lemma:new-1}.

Combine (i)(ii) we conclude $Q^\alpha(\delta,d+1,l+1;0)< Q^\alpha(\delta,d+1,l+1;1)$ if $Q^\alpha(\delta,d,l;0)< Q^\alpha(\delta,d,l;1)$ when $\alpha\rightarrow 1$.
\end{proof}

\fi

\section{Numerical Results}\label{sec:simu}
In this section, we numerically search for the optimal thresholds and evaluate the performance of the corresponding transmission policy. 

We propose a structured value iteration algorithm to obtain the thresholds, as detailed in Algorithm~\ref{algorithm:ssa}. In order to reduce the computational complexity, we define an approximate MDP as follows: We define $\delta_m$ as the boundary AoI, and truncate the state space of the original MDP as $\Sc_m=\{\sv\in \Sc: \delta\leq \delta_m, d\leq \delta_m-k\}$. If under the action $a_t$, $\sv_{t+1}$ becomes outside $\Sc_m$, we will let it be the corresponding capped boundary state. We can show that the approximate MDP is identical to the original MDP when $\delta_{m}\rightarrow \infty$. Besides, we also leverage the structure of the optimal policy to reduce the computational complexity. 

 \begin{algorithm}[h]
 	\caption{Structured Value Iteration.}\label{algorithm:ssa}
 	\begin{algorithmic}[l]
 		\State Initialize: $V_0(\sv)=0,\forall \sv\in \Sc_m$.
 		\For{$i=0 : n$}
 		\For{$\forall (\delta,d,l)\in \Sc_m$}
 		\State $Q_i(\sv;a)= C(\sv)+\Eb[V_i(\sv')|\sv,a]$
 		\If{$l=0$}	$a^*(\sv)=0$;
 		\ElsIf{$\exists l'<l, a^*(\delta, d, l')=0$} {$a^*(\sv)=0$}
 		\ElsIf{$\exists d'<d, a^*(\delta, d',l)=1$} {$a^*(\sv)=1$}
 		\Else {\quad $a^*(\sv)=\arg\min_{a\in\{0,1\}} Q_i (\sv;a)$}
 		\EndIf
 		\State 
 		$V_{i+1}(\sv)\hspace{-0.03in}=Q_i(\sv;a^*(\sv))-V_i(\sv_0)$
        \EndFor 
 		\EndFor
 	\end{algorithmic} 
 \end{algorithm}


Fig.~\ref{fig.simu-1} shows the structure of the optimal policy. In Fig.~\ref{fig.simu-1}(a), we observe the threshold structure on $d$ for fixed $\delta,l$. Fig.~\ref{fig.simu-1}(b) indicates the threshold structure on $l$. Finally, Fig.~\ref{fig.simu-1}(b)-(d) exhibits the properties described in Theorem~\ref{thm:d-s-l} and Theorem~\ref{thm:threshold}. 

\begin{figure}
	\centering
	\begin{minipage}{.25\textwidth}
		\centering
		\includegraphics[width=0.8\linewidth]{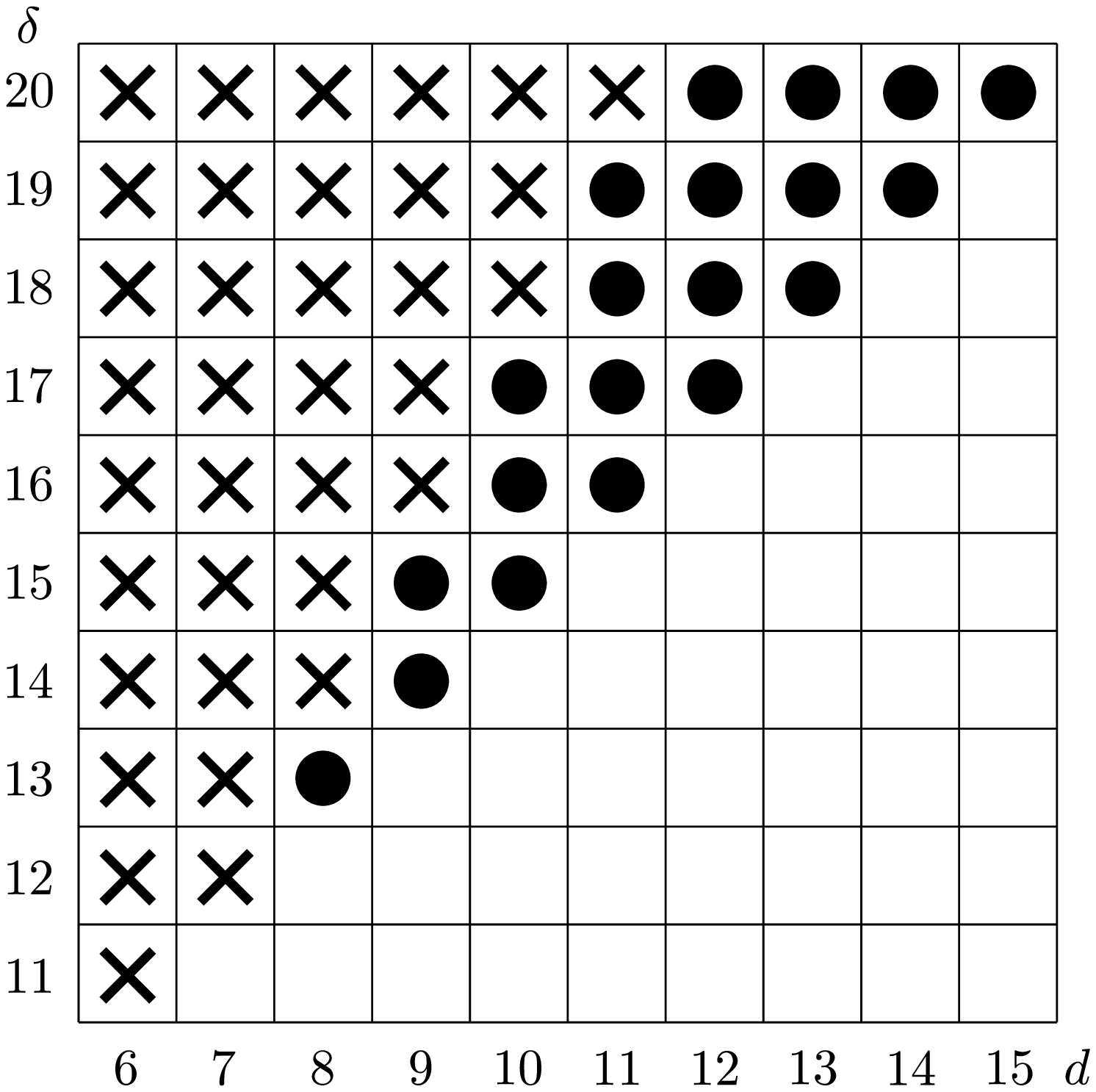}
		\captionsetup{labelformat=empty}
		\captionof{figure}{(a) $p=0.5,k=5,l=3$}
	\end{minipage}%
	\begin{minipage}{.25\textwidth}
		\centering
		\includegraphics[width=0.8\linewidth]{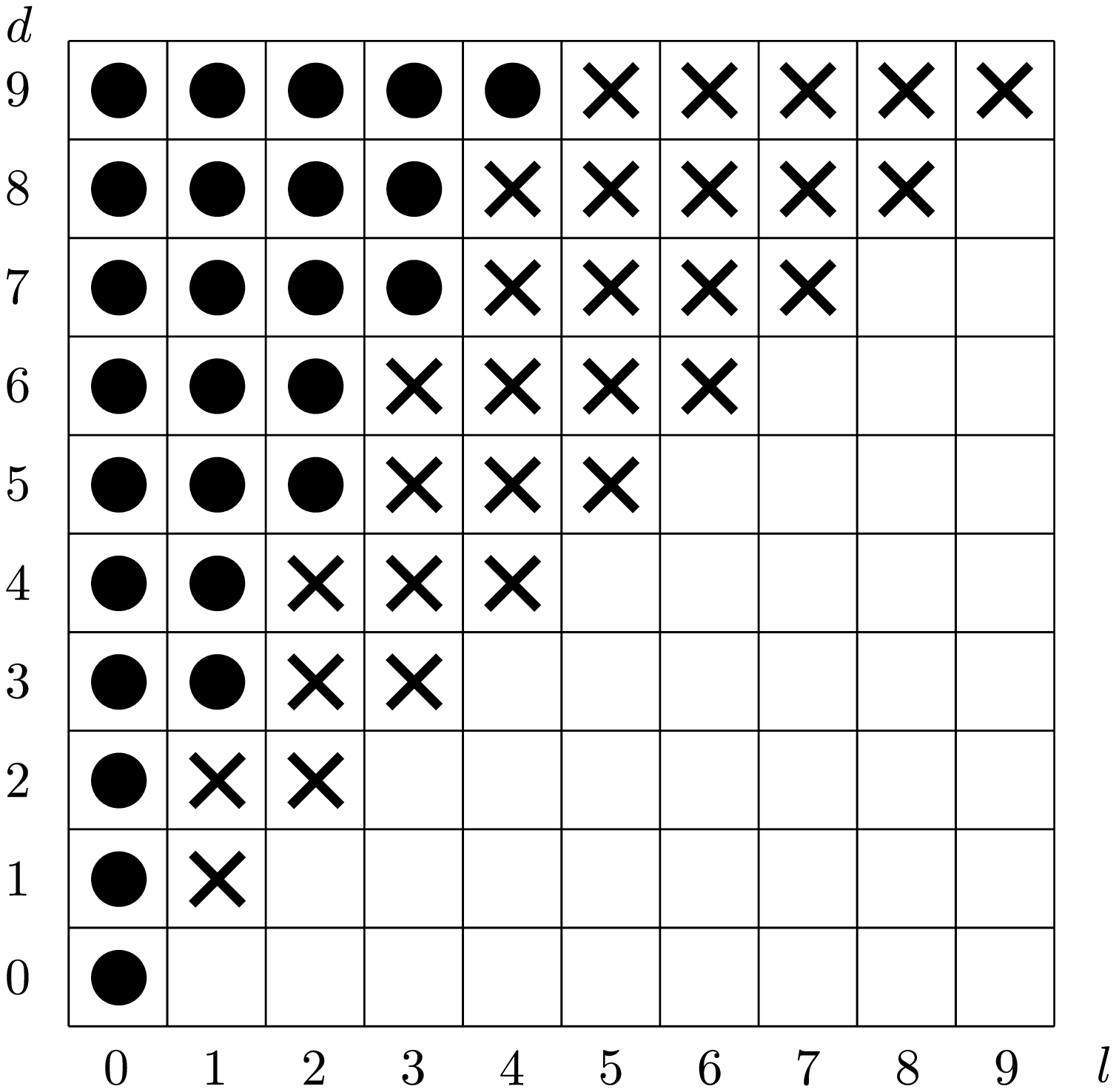}
		\captionsetup{labelformat=empty}
		\captionof{figure}{(b) $p=0.5,k=10,\delta=19$}
	\end{minipage}
	\begin{minipage}{.25\textwidth}
		\centering
		\includegraphics[width=0.8\linewidth]{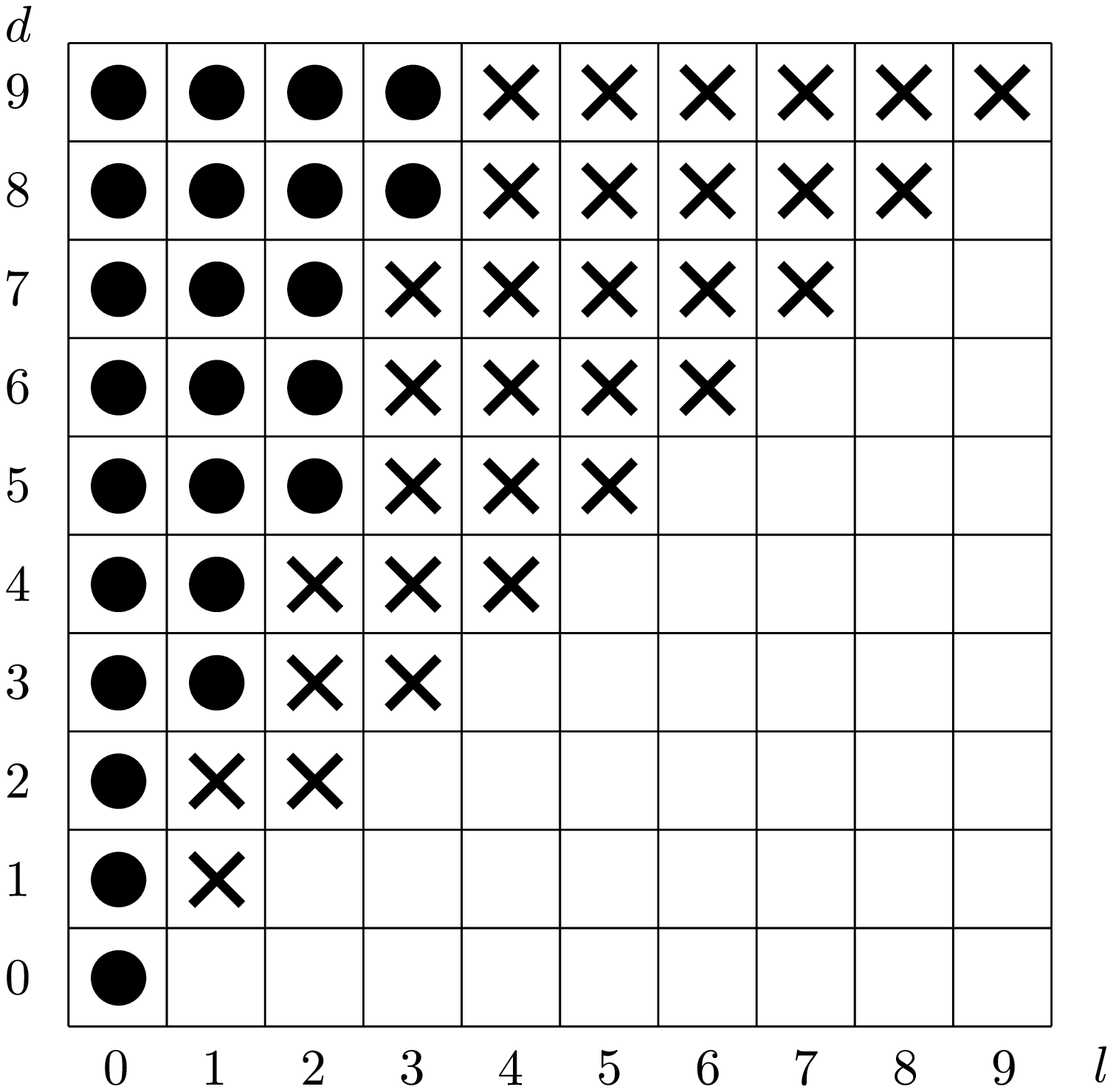}
		\captionsetup{labelformat=empty}
		\captionof{figure}{(c) $p=0.5,k=10,\delta=20$}
	\end{minipage}%
	\begin{minipage}{.25\textwidth}
		\centering
		\includegraphics[width=0.8\linewidth]{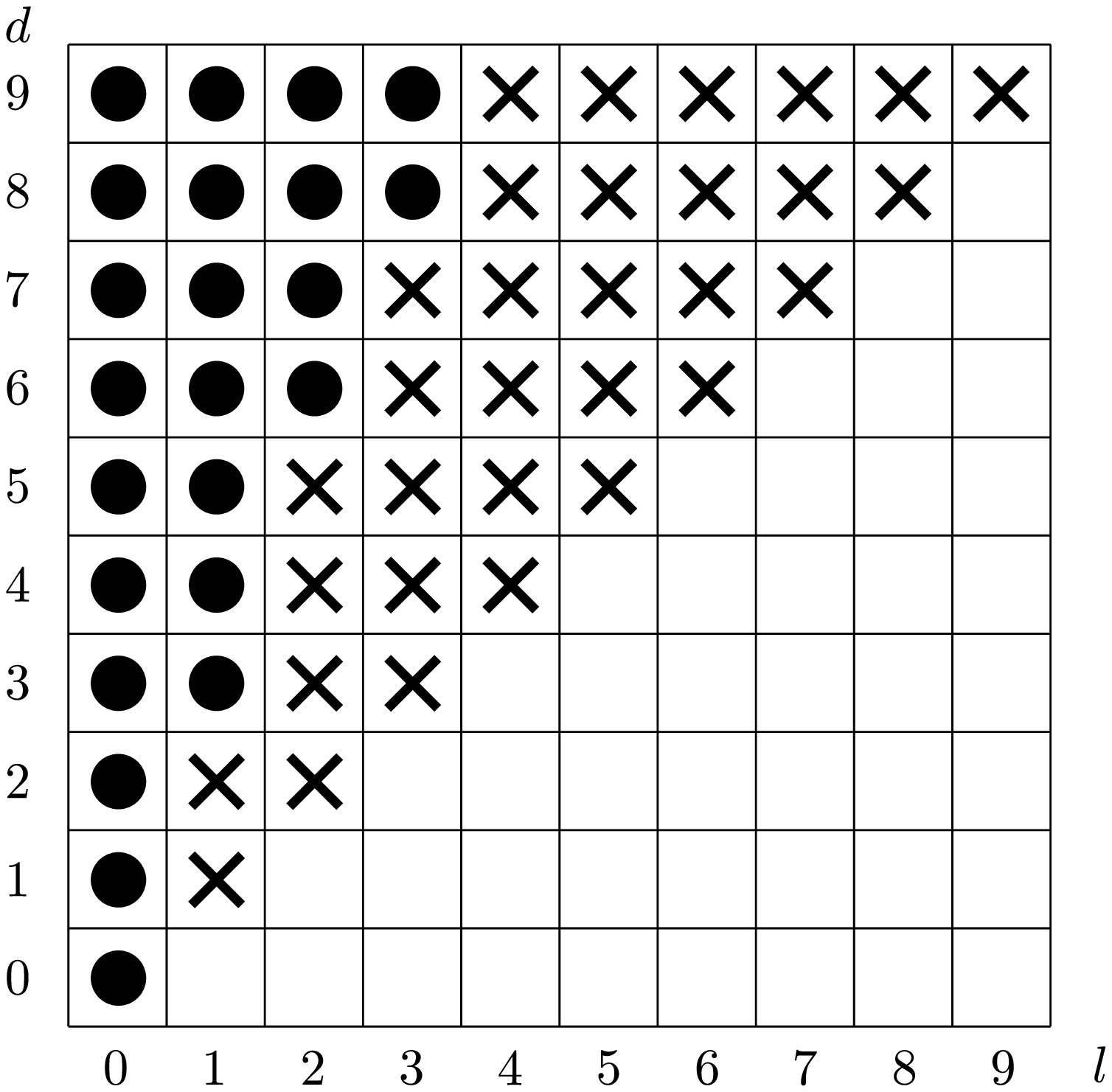}
		\captionsetup{labelformat=empty}
		\captionof{figure}{(d) $p=0.5,k=10,\delta=21$}
	\end{minipage}
	\caption{Threshold structure. Dots represent new transmission and crosses represent continuing unfinished transmission.}
	\label{fig.simu-1}	\vspace{-0.2in}
\end{figure}

Next, we evaluate the performances of the optimal policy and a persistent policy in Fig.~\ref{fig:simu-5}. Under the persistent policy, the source keeps transmitting an update until it is delivered. We fix update size $k=5$ and vary $p$ in $(0.1,0.9)$. We note that the optimal policy outperforms the persistent policy and the performance gap decreases as $p\rightarrow 1$. This is intuitive since when the erasure probability (i.e.,$1-p$) decreases, the source has less incentive to discard an unfinished update, thus it becomes identical to the persistent policy over more time slots.



\begin{figure}[t]
		\centering
		\includegraphics[width=0.85\linewidth]{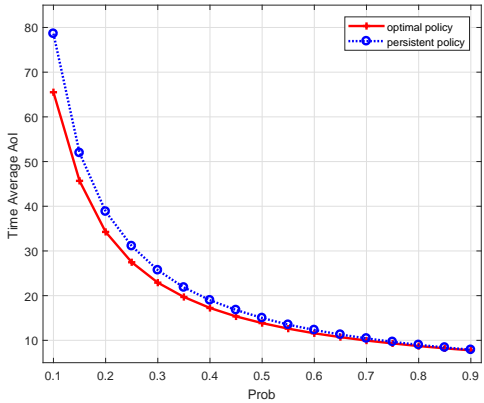}
		\captionof{figure}{Performance comparison with $k=5$.}
		\label{fig:simu-5}
		\vspace{-0.25in}
\end{figure}

\section{Conclusions}\label{sec:sumry}
In this paper, we considered the optimal transmission scheduling of rateless codes in an erasure channel for AoI optimization. Theoretical analysis indicates that the optimal policy has a monotonic threshold structure.

\bibliographystyle{IEEEtran}
\bibliography{AgeInfo,ener_harv}

\end{document}